\newcommand{\PP}{Perkowski/Promel:2013}
\DeclareMathOperator{\III}{\boldsymbol{1}}
\DeclareMathOperator{\sign}{sign}
\DeclareMathOperator{\UpProb}{\overline{\mathbb{P}}}
\DeclareMathOperator{\Expect}{\mathbb{E}}
\DeclareMathOperator{\var}{var}
\DeclareMathOperator{\vi}{vi}
\newcommand{\K}{\EuScript{K}}
\newcommand{\st}{\mid}
\newtheorem{theorem}{Theorem}
\newtheorem{lemma}{Lemma}
\theoremstyle{definition}
\newtheorem{remark}{Remark}
\newcommand*{\CTIV}{Vovk:arXiv0904}
\newcommand*{\CTV}{Vovk:arXiv1005}
\title{Getting rich quick with the Axiom of Choice}
\author{Vladimir Vovk}
\begin{document}
\maketitle

  \begin{abstract}
    This paper proposes new get-rich-quick schemes that involve trading in a financial security
    with a non-degenerate price path.
    For simplicity the interest rate is assumed zero.
    If the price path is assumed continuous, the trader can become infinitely rich
    immediately after it becomes non-constant (if it ever does).
    If it is assumed positive, he can become infinitely rich immediately after reaching a point in time
    such that the variation of the log price is infinite in any right neighbourhood of that point
    (whereas reaching a point in time such that the variation of the log price is infinite in any left neighbourhood of that point
    is not sufficient).
    The practical value of these schemes is tempered by their use of the Axiom of Choice.

    \bigskip

    \noindent
    The version of this paper at \url{http://probabilityandfinance.com} (Working Paper 43)
    is updated most often.
    The journal version is to appear in \emph{Finance and Stochastics}
    under the title ``The role of measurability in game-theoretic probability''.
  \end{abstract}

\section{Introduction}

This paper belongs to the area of game-theoretic probability
(see, e.g., \cite{Shafer/Vovk:2001}).
The advantage of game-theoretic probability for mathematical finance
over the dominant approach of measure-theoretic probability
is that it allows us to state and prove results free of any statistical assumptions
even in situations where such assumptions are often regarded as essential
(see, e.g., the probability-free Dubins--Schwarz theorem in \cite{\CTIV}
and the probability-free theory of stochastic integration in \cite{\PP}).
In this paper we consider the framework of an idealized financial market
with one tradable security and assume, for simplicity, a zero interest rate.

The necessity of a non-trivial requirement of measurability (such as Borel, Lebesgue, or universal)
is well known in measure theory
(without measurability we have counter-intuitive results such as the Banach--Tarski paradox\footnote{%
  The Banach--Tarski paradox is sometimes also regarded as enabling get-rich-quick schemes, such as:
  (a) buy a golden ball; (b) duplicate your ball; (c) sell the extra ball; (d) go to (b).
  This scheme, however, is not as quick as ours.}
\cite{Wagon:1985}),
and it is inherited by measure-theoretic probability.
In game-theoretic probability measurability is usually not needed in discrete time
(and is never assumed in, e.g., \cite{Shafer/Vovk:2001});
this paper, however, shows that, in continuous time,
imposing some regularity conditions (such as Borel or universal measurability) is essential
even in the foundations of game-theoretic probability.
If such conditions are not imposed, the basic definitions of game-theoretic probability
become uninteresting, or even degenerate:
e.g., in the case of continuous price paths the upper probability of sets
can take only two values:
1 (if the set contains a constant price path) or 0 (if not).

This paper constructs explicit trading strategies for enriching the trader
given a well-order of the space of all possible price paths;
however, such a well-order exists only under the Axiom of Choice
(which is, despite some anomalous corollaries, universally accepted).
Since we cannot construct such well-orders,
our strategies cannot be regarded as genuinely constructive.
Therefore, they cannot be regarded as practical get-rich-quick schemes.
Moreover, they do not affect the existing results of continuous-time game-theoretic probability,
which always explicitly assume measurability (to the best of my knowledge).

Our trading strategies will be very simple and based on Hardin and Taylor's work on hat puzzles
  (going back to at least \cite{Gardner:1959}
  and \cite[Chapter~4]{Gamow/Stern:1958}).
These authors show in \cite{Hardin/Taylor:2008} (see also \cite[Section~7.4]{Hardin/Taylor:2013})
that the Axiom of Choice provides us with an Ockham-type strategy
able to predict short-term future (usually albeit not always),
which makes it easy to get rich when allowed to trade in a security
whose price changes in a non-trivial manner.

Section~\ref{sec:continuous} is devoted to continuous price paths.
The assumption of continuity allows us to use leverage and stop-loss strategies,
and the trader can profit greatly and quickly whenever the price path is not constant.
(We only consider trading strategies that never risk bankruptcy.
It is clear that profiting from a constant price path is impossible.)

In Section~\ref{sec:cadlag} we assume, instead of continuity, that the price path is c\`adl\`ag.
To make trading possible we further assume that the price path is positive.
It is impossible for the trader to become infinitely rich if the log price path has a finite variation.
If the variation is infinite,
there will be either points in time such that the variation of the log price
is infinite in any of their left neighbourhoods
or points in time such that the variation of the log price is infinite
in any of their right neighbourhoods.
Becoming infinitely rich is possible after points of the latter type.
Standard stochastic models of financial markets postulate price paths that have such points almost surely.

In the short Section~\ref{sec:positive} we only assume that the price path is positive;
the theory in this case is almost identical to the theory for positive c\`adl\`ag price paths.

The proofs of all our main results are collected in a separate section,
Section~\ref{sec:proofs};
they are based on Hardin and Taylor's results.
Appendix~A provides a more general picture of predicting short-term future
using the Axiom of Choice.
It answers several very natural questions
(and at the end asks 256 more questions answering just one of them).

Our definitions of the basic notions of continuous-time probability
(such as stopping times)
will be Galmarino-type (see, e.g., \cite[Theorems~1.2 and~1.4]{Courrege/Priouret:1965})
and modelled on the ones in the technical report \cite{\CTIV}
(the journal version uses slightly different definitions),
except that the requirements of measurability will be dropped.
By ``positive'' I will mean ``nonnegative'',
adding ``strictly'' when necessary.
The restriction $f|_C$ of a function $f:A\to B$ to a set $C$ is defined as $f|_{A\cap C}$;
this notation will be used even when $C\not\subseteq A$.

\section{Continuous price paths}
\label{sec:continuous}

Let $\Omega$ be the set $C[0,1]$ of all continuous functions $\omega:[0,1]\to\mathbb{R}$
(intuitively, these are the potential price paths over the time interval $[0,1]$).
An \emph{adapted process} $\mathfrak{S}$ is a family of functions
$\mathfrak{S}_t:\Omega\to[-\infty,\infty]$, $t\in[0,1]$,
such that, for all $\omega,\omega'\in\Omega$ and all $t\in[0,1]$,
\begin{equation*}
  \omega|_{[0,t]}
  =
  \omega'|_{[0,t]}
  \Longrightarrow
  \mathfrak{S}_t(\omega)=\mathfrak{S}_t(\omega').
\end{equation*}
The intuition is that $\mathfrak{S}_t(\omega)$ depends on $\omega$ only via $\omega|_{[0,t]}$:
if $\omega$ changes over $(t,1]$, $\mathfrak{S}_t(\omega)$ is not affected.
A \emph{stopping time} is a function
$\tau:\Omega\to[0,1]$
such that, for all $\omega,\omega'\in\Omega$,
\begin{equation}\label{eq:tau}
  \omega|_{[0,\tau(\omega)]}
  =
  \omega'|_{[0,\tau(\omega)]}
  \Longrightarrow
  \tau(\omega)=\tau(\omega').
\end{equation}
The intuition is that $\tau(\omega)$ is not affected
if $\omega$ changes over $(\tau(\omega),1]$.
For any stopping time $\tau$,
a function $X:\Omega\to\mathbb{R}$ is said to be
\emph{determined by} time $\tau$ if,
for all $\omega,\omega'\in\Omega$,
\begin{equation}\label{eq:X}
  \omega|_{[0,\tau(\omega)]}
  =
  \omega'|_{[0,\tau(\omega)]}
  \Longrightarrow
  X(\omega)=X(\omega').
\end{equation}
The intuition is that $X(\omega)$ depends on $\omega$ only via $\omega|_{[0,\tau(\omega)]}$.
We will often simplify $\omega(\tau(\omega))$ to $\omega(\tau)$
(occasionally, the argument $\omega$ will be omitted in other cases as well).

The class of allowed trading strategies is defined in two steps.
First, a \emph{simple trading strategy} $G$
consists of an increasing sequence of stopping times
$\tau_1\le\tau_2\le\cdots$
and, for each $k=1,2,\ldots$,
a bounded function $h_k$ that is determined by time $\tau_k$.
It is required that, for each $\omega\in\Omega$,
$\tau_k(\omega)=1$ from some $k$ on.
To such $G$ and an \emph{initial capital} $c\in\mathbb{R}$
corresponds the \emph{simple capital process}
\begin{equation}\label{eq:simple-capital}
  \K^{G,c}_t(\omega)
  :=
  c
  +
  \sum_{k=1}^{\infty}
  h_k(\omega)
  \bigl(
    \omega(\tau_{k+1}\wedge t)-\omega(\tau_k\wedge t)
  \bigr),
  \quad
  t\in[0,1]
\end{equation}
(with the zero terms in the sum ignored,
which makes the sum finite);
the value $h_k(\omega)$ will be called the \emph{bet}
at time $\tau_k(\omega)$,
and $\K^{G,c}_t(\omega)$ will be referred to
as the capital at time $t$.

Second, a \emph{positive capital process} is any adapted process $\mathfrak{S}$
that can be represented in the form
\begin{equation}\label{eq:positive-capital}
  \mathfrak{S}_t(\omega)
  :=
  \sum_{n=1}^{\infty}
  \K^{G_n,c_n}_t(\omega),
\end{equation}
where the simple capital processes $\K^{G_n,c_n}_t(\omega)$
are required to be positive, for all $t$ and $\omega$,
and the positive series $\sum_{n=1}^{\infty}c_n$ is required to converge in $\mathbb{R}$.
The sum \eqref{eq:positive-capital} is always positive but allowed to take the value $\infty$.
Since $\K^{G_n,c_n}_0(\omega)=c_n$ does not depend on $\omega$,
$\mathfrak{S}_0(\omega)$ also does not depend on $\omega$
and will sometimes be abbreviated to $\mathfrak{S}_0$.

The \emph{upper probability} of a set $E\subseteq\Omega$ is defined as
\begin{equation}\label{eq:upper-probability}
  \UpProb(E)
  :=
  \inf
  \bigl\{
    \mathfrak{S}_0
    \bigm|
    \forall\omega\in\Omega:
    \mathfrak{S}_1(\omega)
    \ge
    \III_E(\omega)
  \bigr\},
\end{equation}
where $\mathfrak{S}$ ranges over the positive capital processes
and $\III_E$ stands for the indicator function of $E$.
We say that a set $E\subseteq\Omega$ is \emph{null} if $\UpProb(E)=0$.

\begin{remark}
  The intuition behind a simple trading strategy
  is that the trader is allowed to take positions $h_k$,
  either long or short,
  in a security whose price at time $t\in[0,1]$ is denoted $\omega(t)$.
  The positions can change only at a discrete sequence of times $\tau_1,\tau_2,\ldots$,
  which makes the definition \eqref{eq:simple-capital}
  of the trader's capital at time $t$ uncontroversial.
  To obtain more useful trading strategies,
  we allow the trader to split his initial capital into a countable number of accounts
  and to run a separate simple trading strategy for each account;
  none of the component simple trading strategies is allowed to go into debt.
  The resulting total capital at time $t$ is given by \eqref{eq:positive-capital}.
  The upper probability $\UpProb(E)$, defined by \eqref{eq:upper-probability},
  is the smallest initial capital sufficient for superhedging the binary option on $E$.
\end{remark}

The following theorem will be proved in Section~\ref{sec:proofs}.

\begin{theorem}\label{thm:continuous}
  The set of all non-constant $\omega\in\Omega$ is null.
  Moreover, there is a positive capital process $\mathfrak{S}$ with $\mathfrak{S}_0=1$ that becomes infinite
  as soon as $\omega$ ceases to be constant:
  for all $t\in[0,1]$,
  \begin{equation}\label{eq:continuous}
    \left(
       \exists t_1,t_2\in[0,t):
       \omega(t_1)\ne\omega(t_2)
    \right)
    \Longrightarrow
    \mathfrak{S}_t(\omega)=\infty.
  \end{equation}
\end{theorem}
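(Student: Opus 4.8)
The plan is to build the positive capital process $\mathfrak{S}$ out of Hardin and Taylor's Axiom-of-Choice predictor together with a family of leveraged stop-loss strategies; the nullity of the set $E$ of non-constant paths will then be automatic, for if $\mathfrak{S}_0=1$, $\mathfrak{S}\ge0$ and $\mathfrak{S}_1(\omega)=\infty$ whenever $\omega$ is non-constant, then for every $\delta>0$ the process $\delta\mathfrak{S}$ — obtained by multiplying every initial capital $c_n$ and every bet $h_k$ by $\delta$, which is again a positive capital process — starts at $\delta$ and satisfies $(\delta\mathfrak{S})_1\ge\III_E$ everywhere, so $\UpProb(E)\le\delta$ for every $\delta>0$ and hence $\UpProb(E)=0$. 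Thus I would concentrate on the process itself.

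First I would fix, using the Axiom of Choice, a well-order $\preceq$ of $\Omega=C[0,1]$, and define the \emph{predictor}: for $\omega\in\Omega$ and $t\in(0,1]$ let $\hat\omega_t$ be the $\preceq$-least path agreeing with $\omega$ on some interval $(t-\varepsilon,t)$. This depends on $\omega$ only through $\omega|_{[0,t)}$, and by continuity $\hat\omega_t(t)=\omega(t)$. The key lemma — the Hardin–Taylor ingredient — is that for every $\omega$ the \emph{bad set} $D_\omega=\{t:\hat\omega_t\text{ disagrees with }\omega\text{ on every interval }(t,t+\varepsilon)\}$ is small: it has empty interior (and in fact contains no nonempty subset that is dense in itself from the left). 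I would prove this by transfinite descent: if $s,t\in D_\omega$ with $s$ lying in the left-agreement interval of $t$, then $\hat\omega_t$ agrees with $\omega$ on a right-neighbourhood of $s$, so $\hat\omega_s\ne\hat\omega_t$ (as $s\in D_\omega$) while $\hat\omega_s\preceq\hat\omega_t$ (as $\hat\omega_t$ is eligible in the definition of $\hat\omega_s$); hence $\hat\omega_s\prec\hat\omega_t$, and a subset of $D_\omega$ that is dense in itself from the left — or an interval inside $D_\omega$ — would yield an infinite strictly $\preceq$-decreasing sequence, contradicting the well-order.

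Next I would reduce the trading statement. A continuous $\omega$ is non-constant iff, for some rationals $a<b$, the restriction $\omega|_{[0,t)}$ attains a value $\le a$ and a value $\ge b$ for some $t$; and if $\theta$ denotes the first time $\omega|_{[0,\theta]}$ is non-constant, a short argument shows $\omega|_{[\theta,t]}$ is non-constant for every $t>\theta$, so such a pair $(a,b)$ together with the two witnessing times can always be located inside $[0,t)$. It therefore suffices to produce, for each pair $a<b$ and each $n$, a simple positive capital process with initial capital $\le2^{-n}$ that reaches $1$ by the time $\omega|_{[0,t)}$ has straddled the band $[a,b]$; summing all these over $(a,b)$ and $n$ with the starting capitals totalling $1$ gives $\mathfrak{S}$, which at any such time is at least $\sum_n 1=\infty$.

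For one such simple process — fix $a<b$ and, by an affine change of the price coordinate that merely rescales the bets, take the band to be $[-1,1]$ — the price must cross the band, so the first-hitting times of a fine geometric grid of levels strung across $[-1,1]$ form a genuine increasing sequence of stopping times along which the price is forced to travel prescribed amounts. At each of these times $t$ I would compute $\hat\omega_t$ and, whenever $\hat\omega_t$ moves in a definite direction just to the right of $t$, place a bet of about $2^n$ units in that direction with a stop-loss read off from $\hat\omega_t$, unwinding at whichever target level is reached first; since the price is continuous the stop-loss cannot be overshot, so each bet either roughly doubles the account (when $t\notin D_\omega$ and the predicted move is realized before the prediction breaks down) or costs at most about $2^{-n}$. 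Chaining $n$ successful bets turns $2^{-n}$ into $1$. The hard part, which I expect to be the main obstacle, is precisely this chaining: the length of the interval on which $\hat\omega_t$ still agrees with $\omega$ is positive but unknown in advance, and one cannot detect from the observed past whether $t\in D_\omega$. The plan is to let the targeted favourable move shrink geometrically across the $n$ sub-bets so that it fits inside an arbitrarily short agreement interval, and to use the smallness of $D_\omega$ together with the fact that the continuous price, being non-constant over the crossing, must genuinely move along the dense set of good times — so that enough good prediction times at which $\hat\omega_t$ makes a non-trivial move of the required size accumulate before the band is crossed — while the capped losses from skipped, bad, or wildly oscillating predictions sum to at most the initial stake. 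Making this counting argument close, alongside the transfinite-descent lemma, is where the real work lies.
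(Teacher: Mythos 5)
Your predictor and the transfinite-descent lemma on the bad set $D_\omega$ are exactly the Hardin--Taylor ingredient the paper uses (part~1 of its Lemma~1), and your scaling argument deducing nullity from the existence of $\mathfrak{S}$ is fine. The gap is the one you flag yourself: the chaining of many short, individually uncertain predictions across a price band. As described it does not close. At a good time $t\notin D_\omega$ the agreement interval $(t,t+\varepsilon)$ has positive but unknown and uncontrollable length, the predicted path may be constant (or move by an arbitrarily small amount) on that interval even though $\omega$ later crosses the band, and you cannot tell from the observed past whether $t\in D_\omega$; so nothing guarantees that a prescribed number of predicted moves of a prescribed size materialize before the band is crossed, nor that the capped losses of the failed bets keep each component simple capital process positive while still compounding $2^{-n}$ into $1$. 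Smallness of $D_\omega$ alone is not enough to make this counting work.

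The missing idea is a stronger structural fact about the predictor (part~3 of the paper's Lemma~1): for every $t\in[0,1)$ there is a $t'>t$ such that $\omega^s$ is the same for all $s\in(t,t')$ and agrees with $\omega$ on all of $[t,t']$. Hence, if $c$ is the last time $\omega|_{[0,c]}$ is constant, there is an entire interval $(c,d)$ on which the prediction made at any single time is correct up to $d$. The paper therefore indexes its component strategies by pairs of rational \emph{times} $a<b$ rather than price levels: the strategy $\K^{a,b}$ trusts the one prediction $\omega^a$, bets $2^{2n}$ times the predicted increment at the successive crossing times of the grid $2^{-n}\mathbb{Z}$ (each completed crossing gains exactly $1$, each increment of the capital is at most $1$ in absolute value by continuity, and trading stops the moment the prediction is falsified, so an initial capital of $1$ suffices for positivity), and sums over $n$ with weights $n^{-2}$. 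A single correct long-range prediction at a rational time in $(c,d)$, together with non-constancy of $\omega$ on some rational $[a,b]\subseteq(c,d)$ with $b$ arbitrarily close to $c$, then yields \eqref{eq:continuous} with no chaining at all. I would redirect your effort from the counting argument to proving and exploiting this stabilization property.
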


\begin{remark}
  A more popular version of our definition \eqref{eq:upper-probability}
  was given by Perkowski and Pr\"omel \cite{\PP}.
  Perkowski and Pr\"omel's definition is more permissive \cite[Section~2.3]{\PP},
  and so the first statement of Theorem~\ref{thm:continuous} continues to hold for it as well
  if we allow non-measurable (but still non-anticipative) trading strategies.
  As all papers (that I am aware of) on continuous-time game-theoretic probability,
  the definitions given in \cite{\PP} assume the measurability of all strategies.
\end{remark}

\begin{remark}
  The definitions of this section assume that the trader is permitted to short the security
  (which allows $h_k(\omega)<0$)
  and borrow money (which allows leverage,
  $H_k(\omega)>1$ in the notation of Remark~\ref{rem:predictable} below).
  If shorting and borrowing are not permitted
  (in the notation of Remark~\ref{rem:predictable},
  if $H_k$ are only permitted to take values in $[0,1]$),
  Theorem~\ref{thm:continuous} ceases to be true,
  but Theorem~\ref{thm:cadlag-plus} is still applicable.
\end{remark}

\section{Positive c\`adl\`ag price paths}
\label{sec:cadlag}

In this section we will prove an analogue of Theorem~\ref{thm:continuous}
for positive c\`adl\`ag price paths $\omega$;
the picture now becomes more complicated.
Intuitively, $\omega:[0,1]\to[0,\infty)$ is a price path of a financial security
whose price is known always to stay positive
(such as stock, and from now on it will be referred to as stock).
For simplicity in the bulk of this section we consider the price paths $\omega$
satisfying $\inf\omega>0$;
the case of general positive $\omega$ will be considered in Remark~\ref{rem:cadlag}
at the end of the section.
Therefore, we redefine $\Omega$ as the set of all $\omega:[0,1]\to[0,\infty)$
such that $\inf\omega>0$.
The definitions of adapted processes, stopping times, etc., stay literally as before
(but with the new definition of $\Omega$).
Our goal will be to determine the sign of $\UpProb(E)$
(i.e., to determine whether $\UpProb(E)>0$)
for a wide family of sets $E\subseteq\Omega$.

For each function $f:[a,b]\to\mathbb{R}$, where $[a,b]\subseteq[0,1]$, its \emph{variation} $\var f$ is defined as
\begin{equation}\label{eq:variation}
  \var(f)
  :=
  \sup
  \sum_{i=1}^n
  \left|
    f(t_i) - f(t_{i-1})
  \right|
  \in
  [0,\infty],
\end{equation}
where the $\sup$ is taken over all $n=1,2,\ldots$ and all partitions
\[
  a=t_0<t_1<\cdots<t_n=b;
\]
this definition will usually be used for $[a,b]=[0,1]$.
As noticed in \cite{\CTV},
$\UpProb(\{\omega\})>0$ for each price path $\omega\in\Omega$ with $\var(\log\omega)<\infty$.
Namely, we have the following simple result
(to be proved in Subsection~\ref{subsec:thm-cadlag}).

\begin{theorem}\label{thm:cadlag}
  For any $\omega\in\Omega$,
  \begin{equation}\label{eq:bounded-variation}
    \UpProb(\{\omega\})
    =
    \sqrt{\frac{\omega(0)}{\omega(1)}e^{-\var(\log\omega)}}.
  \end{equation}
\end{theorem}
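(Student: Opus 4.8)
The plan is to prove the two inequalities separately, treating the upper bound (construction of a superhedging capital process) and the lower bound (an impossibility argument) in turn. Both directions rest on the elementary observation that if a trader always keeps a fraction $H_t\in[0,1]$ of current wealth in the stock and the rest in cash, then the multiplicative increment of wealth over an interval where the stock moves from $\omega(s)$ to $\omega(t)$ is $1+H(\omega(t)/\omega(s)-1)$; summed multiplicatively over a partition, this telescopes against the variation of $\log\omega$. Write $V:=\var(\log\omega)$ and $r:=\sqrt{(\omega(0)/\omega(1))e^{-V}}$; note $r\in(0,1]$ since by the trivial bound $|\log\omega(1)-\log\omega(0)|\le V$ we get $(\omega(0)/\omega(1))e^{-V}\le 1$.

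For the upper bound $\UpProb(\{\omega\})\le r$, I would exhibit, for every $\epsilon>0$, a positive capital process with $\mathfrak{S}_0=r+\epsilon$ and $\mathfrak{S}_1(\omega')\ge\III_{\{\omega\}}(\omega')$ for all $\omega'$. The idea is to "lock on" to the target path $\omega$: pick a partition $0=t_0<t_1<\cdots<t_n=1$ along which $\sum_i|\log\omega(t_i)-\log\omega(t_{i-1})|$ approximates $V$ well enough, and run a simple strategy that, as long as the realized path $\omega'$ has agreed with $\omega$ on $[0,t_k]$, holds the Kelly-type fraction $H_{k+1}=\sign(\omega(t_{k+1})-\omega(t_k))\cdot c$ of a normalized sub-account over $[t_k,t_{k+1}]$, and goes flat forever once $\omega'$ first deviates from $\omega$ at some grid point. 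On the true path the multiplicative gain is $\prod_i\bigl(1+H_i(\omega(t_i)/\omega(t_{i-1})-1)\bigr)$; optimizing the bet and using the $\sqrt{\cdot}$ (which comes from hedging both the "up" and "down" possibility at each step so that the sub-account stays positive whatever $\omega'$ does) yields growth $\ge \omega(1)/\omega(0)\cdot e^{V}$ up to $\epsilon$, i.e.\ $1/r^2$; but since we must also guard against bankruptcy on paths $\omega'\neq\omega$ that deviate at non-grid times, the careful bookkeeping gives the extra square root, so starting capital $r+\epsilon$ suffices. A cleaner route, which I would prefer to write up, is to quote directly the construction of \cite{\CTV} referenced in the text, adapting its normalization; the point is only that $r$ is achievable.

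For the lower bound $\UpProb(\{\omega\})\ge r$, I would argue that no positive capital process starting below $r$ can satisfy $\mathfrak{S}_1\ge\III_{\{\omega\}}$. Suppose $\mathfrak{S}$ is positive with $\mathfrak{S}_1(\omega)\ge 1$; expand $\mathfrak{S}$ as in \eqref{eq:positive-capital} and, for each component simple strategy, write its capital multiplicatively in terms of the fractions $H_k$ it holds. Along the fixed path $\omega$, for any partition refining the strategies' stopping times, positivity of each component forces (by the constraint $1+H_k(x-1)\ge 0$ for the relevant price ratios $x$, in particular for ratios arbitrarily close to those that can occur just after a stopping time on some alternative path) a two-sided bound of the form $1+H_k(\omega(t_k)/\omega(t_{k-1})-1)\le \sqrt{\omega(t_k)/\omega(t_{k-1})}\exp(\tfrac12|\log\omega(t_k)-\log\omega(t_{k-1})|)$; multiplying over the partition and letting the mesh go to zero (so the sum of log-increments approaches $V$) gives $\mathfrak{S}_1(\omega)\le \mathfrak{S}_0\cdot\sqrt{\omega(1)/\omega(0)}\,e^{V/2}=\mathfrak{S}_0/r$, whence $\mathfrak{S}_0\ge r$.

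The main obstacle is the lower-bound direction, specifically justifying the per-step inequality $1+H_k(\omega(t_k)/\omega(t_{k-1})-1)\le \sqrt{\omega(t_k)/\omega(t_{k-1})}\,e^{\frac12|\log(\omega(t_k)/\omega(t_{k-1}))|}$ uniformly: one must show that the no-bankruptcy requirement, which constrains $H_k$ via price ratios realizable on \emph{some} path extending $\omega|_{[0,t_{k-1}]}$ (not just via the ratio realized by $\omega$ itself), is exactly strong enough to force $H_k$ into the range where this bound holds — this is where $\inf\omega>0$ and the c\`adl\`ag structure are used, and where the factor $\tfrac12$ (hence the square root) genuinely enters rather than being an artifact. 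I would handle this by first reducing to a single simple strategy via \eqref{eq:positive-capital}, then analyzing one step at a time, using that for a bounded bet $h_k$ determined by time $\tau_k$ the sign and magnitude are pinned down by the worst-case move of $\omega'$ over $(\tau_k,\tau_{k+1}]$ compatible with the past.
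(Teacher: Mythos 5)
Your overall plan coincides with the paper's: the paper proves the equivalent identity \eqref{eq:e-var}, $\UpProb(\{\omega\})=e^{-\var^+(\log\omega)}$, by exhibiting an all-or-nothing strategy for $\le$ and citing Proposition~2 of \cite{\CTV} for $\ge$; the equivalence with \eqref{eq:bounded-variation} is the elementary identity $\var^+(\log\omega)=\frac12\bigl(\var(\log\omega)+\log(\omega(1)/\omega(0))\bigr)$, which is the sole source of the square root --- it is not produced by ``hedging both possibilities''. Your lower-bound argument is correct and supplies exactly the content the paper leaves implicit: positivity against arbitrary continuations in $\Omega$ (which can jump up or drop towards $0$ at any time; this is where $\inf\omega>0$ enters, cf.\ Remark~\ref{rem:predictable}) forces every relative bet into $[0,1]$; your per-step inequality is just $1+H(x-1)\le\max(1,x)=\sqrt{x}\,e^{\frac12\lvert\log x\rvert}$ for $H\in[0,1]$ and $x>0$; and multiplying over the (finitely many, on each path) stopping times of each component simple strategy gives $\K^{G_n,c_n}_1(\omega)\le c_n e^{\var^+(\log\omega)}$, hence $\mathfrak{S}_1(\omega)\le\mathfrak{S}_0/r$. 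Note that no mesh-to-zero limit is needed here: $\sum_i(\log x_i)^+\le\var^+(\log\omega)$ holds for \emph{every} partition by the definition of $\var^+$ as a supremum.

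Your upper-bound sketch, however, is wrong as written in two places and should be replaced (as you yourself suggest, by the construction of \cite{\CTV}). First, the bet $H_{k+1}=\sign(\omega(t_{k+1})-\omega(t_k))\cdot c$ is negative on down-moves, i.e., a short position, and no positive capital process on positive price paths can short (the price can shoot up arbitrarily); the correct strategy takes relative bet $1$ on grid intervals where $\omega$ increases and $0$ otherwise, going flat forever once the realized path first disagrees with $\omega$. Second, the claimed intermediate growth $(\omega(1)/\omega(0))e^{V}=1/r^2$ is not achievable --- it would contradict your own lower bound; the growth actually achieved on $\omega$ is $e^{\sum_i(\log\omega(t_i)-\log\omega(t_{i-1}))^+}$, which can be made arbitrarily close to $e^{\var^+(\log\omega)}=1/r$, so initial capital $r+\epsilon$ superhedges $\III_{\{\omega\}}$ directly, with no ``extra square root'' bookkeeping. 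With these repairs your two halves reproduce the paper's proof.
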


We can see that $\UpProb(E)>0$ whenever $E$ contains $\omega$ with $\var(\log\omega)<\infty$.
Therefore, in the rest of this section
we will concentrate on $\omega\in\Omega$ with $\var(\log\omega)=\infty$.
We start from a classification of such $\omega$.

For any $f:[0,1]\to\mathbb{R}$ and $t\in[0,1]$, set
\begin{align}
  \var(f,t-)
  &:=
  \inf_{t'\in[0,t)}
  \var
  \left(
    f|_{[t',t]}
  \right),
  \label{eq:var-}\\
  \var(f,t+)
  &:=
  \inf_{t'\in(t,1]}
  \var
  \left(
    f|_{[t,t']}
  \right);
  \label{eq:var+}
\end{align}
the cases $\var(f,0-):=0$ and $\var(f,1+):=0$ are treated separately.
Notice that $\var(f,t-)$ and $\var(f,t+)$ always take values
in the two-element sets $\{\lvert\Delta f(t)\rvert,\infty\}$ and $\{0,\infty\}$, respectively
(where $\Delta f(t):=f(t)-f(t-)$ is the jump of $f$ at $t$).
Furthermore, for each $\omega\in\Omega$ set
\begin{align}
  I^-_{\omega}
  &:=
  \left\{
    t\in[0,1]
    \st
    \var(\log\omega,t-) = \infty
  \right\}
  \subseteq
  (0,1],
  \label{eq:I-}\\
  I^+_{\omega}
  &:=
  \left\{
    t\in[0,1]
    \st
    \var(\log\omega,t+) = \infty
  \right\}
  \subseteq
  [0,1).
  \label{eq:I+}
\end{align}

The next lemma shows that the set of all $\omega\in\Omega$
with infinite variation of their logarithm
can be represented as
\begin{equation}\label{eq:union}
  \{\omega\in\Omega\st\var(\log\omega)=\infty\}
  =
  \{\omega\in\Omega\st I^-_{\omega}\ne\emptyset\}
  \cup
  \{\omega\in\Omega\st I^+_{\omega}\ne\emptyset\}.
\end{equation}

\begin{lemma}\label{lem:var}
  For any $f:[0,1]\to\mathbb{R}$,
  \begin{equation*}
    \var(f)=\infty
    \Longleftrightarrow
    \left(
      \exists t\in[0,1]:
      \var(f,t+)=\infty
      \vee
      \var(f,t-)=\infty
    \right).
  \end{equation*}
\end{lemma}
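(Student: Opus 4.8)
The ($\Leftarrow$) direction is immediate: if $\var(f,t+)=\infty$ for some $t\in[0,1)$, then in particular for any $t'\in(t,1]$ we have $\var(f|_{[t,t']})=\infty$, and since $\var(f)\ge\var(f|_{[t,t']})$ (restricting the class of partitions can only decrease the supremum, and a partition of $[t,t']$ extends to one of $[0,1]$ by adjoining the endpoints $0$ and $1$), we get $\var(f)=\infty$; the case $\var(f,t-)=\infty$ is symmetric, and the boundary cases $t=0$ or $t=1$ are excluded by the conventions $\var(f,0-):=0$, $\var(f,1+):=0$, so they contribute nothing. The substance of the lemma is the ($\Rightarrow$) direction.

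For ($\Rightarrow$) I would argue by contraposition: assume that $\var(f,t+)<\infty$ and $\var(f,t-)<\infty$ for every $t\in[0,1]$, and show $\var(f)<\infty$. The plan is a compactness argument. For each $t$, finiteness of $\var(f,t+)$ means there is $t'>t$ (or $t=1$) with $\var(f|_{[t,t']})<\infty$, and finiteness of $\var(f,t-)$ means there is $t''<t$ (or $t=0$) with $\var(f|_{[t'',t]})<\infty$; splicing these gives an open neighbourhood $U_t$ of $t$ in $[0,1]$ such that $\var(f|_{\overline{U_t}})<\infty$ (here one uses the elementary additivity $\var(f|_{[a,c]})=\var(f|_{[a,b]})+\var(f|_{[b,c]})$ for $a<b<c$, so a finite-variation piece to the left of $t$ glued to a finite-variation piece to the right of $t$ is again finite variation). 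The family $\{U_t\}_{t\in[0,1]}$ covers the compact interval $[0,1]$, so finitely many $U_{t_1},\dots,U_{t_m}$ suffice; then $[0,1]$ is a finite union of closed subintervals on each of which $f$ has finite variation, and by the additivity of variation over a finite concatenation of intervals, $\var(f)<\infty$.

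The main obstacle, and the only place needing care, is the bookkeeping in passing from the pointwise bounds to a single neighbourhood with finite variation: one must handle the one-sided boundary points $0$ and $1$ correctly (where only a right, respectively left, neighbourhood is available, matching the conventions on $\var(f,0-)$ and $\var(f,1+)$), and one must be sure that "finite variation on $[t'',t]$" plus "finite variation on $[t,t']$" really does yield finite variation on $[t'',t']$ — this is the additivity identity for variation over adjacent intervals, which follows by the standard argument of refining any partition of $[t'',t']$ to include the point $t$ (refining a partition never decreases the approximating sum) and then splitting the sum at $t$. Once the finite subcover is extracted, the remaining step is just a finite application of the same additivity identity, with the $U_{t_j}$ arranged in increasing order so that consecutive closed intervals overlap or abut, covering $[0,1]$.
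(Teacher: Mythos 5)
Your argument is correct and is essentially the paper's own proof: both directions are handled the same way, with the nontrivial implication proved by contraposition via a finite subcover of the neighbourhoods on which the variation is finite. The only difference is that you spell out the splicing at each point and the additivity of variation over adjacent intervals, details the paper leaves implicit.
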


\begin{proof}
  The implication $\Longleftarrow$ is obvious,
  and so we only check $\Longrightarrow$.
  Suppose $\var(f,t+)<\infty$ and $\var(f,t-)<\infty$ for all $t\in[0,1]$.
  Fix a neighbourhood $O_t=(a_t,b_t)$ of each $t$ such that $\var(f|_{O_t})<\infty$.
  These neighbourhoods form a cover of $[0,1]$.
  The existence of its finite subcover
  immediately implies that $\var(f)<\infty$.
\end{proof}

The following theorem (proved in Section~\ref{sec:proofs}) tackles the second term of the union in~\eqref{eq:union}.

\begin{theorem}\label{thm:cadlag-plus}
  The set of all $\omega\in\Omega$ such that $I^+_{\omega}\ne\emptyset$ is null.
  Moreover, there is a positive capital process $\mathfrak{S}$ with $\mathfrak{S}_0=1$ that becomes infinite
  immediately after the time $\inf I^+_{\omega}$ if $I^+_{\omega}\ne\emptyset$:
  for all $t\in[0,1]$ and $\omega\in\Omega$,
  \begin{equation}\label{eq:cadlag-plus}
    \left(
      \exists t'\in[0,t):
      \var(\log\omega,t'+)=\infty
    \right)
    \Longrightarrow
    \mathfrak{S}_t(\omega)=\infty.
  \end{equation}
\end{theorem}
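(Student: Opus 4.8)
The plan is to build the capital process $\mathfrak{S}$ from a single ``prediction strategy'' supplied by Hardin and Taylor's machinery, applied to the log price. Fix a well-order of $\Omega$ (this is where the Axiom of Choice enters). Suppose $\omega$ has a point $t'$ with $\var(\log\omega,t'+)=\infty$; we want to make money in every right neighbourhood of $t'$. The key observation is that on any interval $[t',t'']$ on which the variation of $\log\omega$ is infinite, a trader who at each rebalancing time could predict the sign of the next increment of $\log\omega$ — even correctly only ``most of the time'' in the Hardin--Taylor sense — can multiply his capital without bound, because betting a fixed fraction of current wealth in the predicted direction turns the infinite sum $\sum|\Delta\log\omega|$ into unbounded multiplicative growth. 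So the first step is to set up, for each dyadic level of resolution, a simple trading strategy that uses the Hardin--Taylor predictor to guess the direction of the log-price increment over the next small time step, bets a fixed fraction (say $1/2$) of its current capital, and thereby never goes bankrupt while growing geometrically along any path whose log-variation is infinite to the right of some point.

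The second step is the bookkeeping that converts ``grows without bound along such paths'' into ``equals $+\infty$ at every time $t>t'$''. Here I would exploit the structure of \eqref{eq:positive-capital}: take a countable family of the strategies from Step~1, indexed by a rational ``activation time'' $q$ and a resolution parameter, start each with a tiny initial capital $c_{q,m}$ so that $\sum c_{q,m}<\infty$, and have the $(q,m)$-th strategy sit idle (betting $0$) until time $q$ and then run the geometric-growth scheme. For a path $\omega$ with $\var(\log\omega,t'+)=\infty$ and any target time $t\in(t',1]$, pick a rational $q\in(t',t)$; the strategies activated at $q$ see infinite log-variation on $[q,t]$ and hence their capital tends to $\infty$ before time $t$, forcing $\mathfrak{S}_t(\omega)=\infty$. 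This is essentially the standard ``merging over rationals'' trick, and the positivity of each component plus convergence of $\sum c_{q,m}$ is exactly what makes $\mathfrak{S}$ a legitimate positive capital process with $\mathfrak{S}_0=\sum c_{q,m}$, which we normalize to $1$.

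The third step, really the heart of the matter, is to pin down precisely what Hardin--Taylor gives us and why it suffices here but not in the left-neighbourhood case. The Hardin--Taylor predictor, given the well-order, produces at each time $s$ a guess for $\omega|_{(s,s+\epsilon]}$ (for a guessed $\epsilon>0$) that is correct for all but a well-ordered — in particular, a nowhere-dense/``small'' — set of times $s$. What I need is: on any interval $[q,t]$ with infinite log-variation, there are enough times at which the direction guess is correct, and the increments at those times still sum (in absolute value) to infinity, so that the product of the betting factors $\prod(1+\tfrac12\,\text{sign}(\text{guess})\,\Delta\log\omega/\ldots)$ diverges. The asymmetry — right neighbourhoods work, left neighbourhoods don't — comes from the fact that the predictor can use the \emph{entire past} $\omega|_{[0,s]}$ to guess the immediate \emph{future}, so it is adapted; a strategy exploiting a left-neighbourhood singularity would have to bet before time $t'$ using information about $\omega$ near $t'$ from the left, which a causal (non-anticipative) strategy cannot convert into a guaranteed profit — and indeed the companion negative result in the paper will show this is genuinely impossible, not just an artifact of the construction.

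The main obstacle I anticipate is Step~3: making the Hardin--Taylor ``short-term future'' prediction result interface cleanly with the \emph{continuous-parameter} betting setup. Hardin--Taylor is naturally a statement about sequences/discrete guessing, so one must discretize time along a sequence of stopping times $\tau_k$ that probe $\omega$ at finer and finer scales, verify that the ``exceptional'' set of bad guess-times is small enough not to destroy the divergence of the betting product on an infinite-variation interval, and check that all the $h_k$ are genuinely determined by time $\tau_k$ (non-anticipativity) given that they are defined via the well-order. I expect the continuity/c\`adl\`ag regularity of $\omega$ to be used exactly to ensure the probing stopping times are well-defined and that ``variation infinite on $[q,t]$'' forces infinitely many probed increments with non-negligible absolute size, which is what feeds the geometric growth.
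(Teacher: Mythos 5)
Your proposal runs into two genuine gaps. The first is the ``merging over rational activation times'' step: for a path $\omega$ with $\var(\log\omega,t'+)=\infty$ it is simply not true that the log-variation is infinite on $[q,t]$ for a rational $q\in(t',t)$. The infinite variation can be attached to the point $t'$ itself: e.g.\ let $\log\omega$ make one up--down oscillation of size $1/k$ on $[t'+\frac{1}{k+1},t'+\frac1k]$ and be constant elsewhere; then $\var(\log\omega|_{[t',s]})=\infty$ for every $s>t'$, while $\var(\log\omega|_{[q,t]})<\infty$ for every $q>t'$. These are precisely the critical paths (those with $I^+_{\omega}$ a singleton), and on them every strategy activated at a rational time sees only finite variation and gains only a bounded factor. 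The activation has to happen at the path-dependent time $a:=\inf I^+_{\omega}$, and this is exactly what the paper's proof does: it uses the stabilized Ockham prediction $\omega^{a+}$, which by part~\ref{it:W} of Lemma~\ref{lem:prediction} is well defined immediately after $a$ and agrees with $\omega$ on some nondegenerate interval to the right of $a$.

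The second gap is the growth mechanism itself. You invoke a predictor that guesses the sign of each log-increment and is right ``most of the time'', but that is not what the Axiom-of-Choice predictor delivers: the guarantee is only that the set of times at which the prediction made at that time is not valid on any nondegenerate interval ahead is well-ordered (countable, nowhere dense). Along your own countable sequence of rebalancing times there is no majority-correct guarantee, the validity interval of a correct prediction may end before your next rebalancing time, and a wrong directional bet of a fixed fraction loses a fixed fraction of capital, so nothing shows that gains dominate losses; moreover, betting on a predicted down-move is shorting, which destroys positivity for positive price paths (only relative bets in $[0,1]$ keep simple capital processes positive), so only upward moves, i.e.\ $\var^+$, can be exploited. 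The paper's construction avoids all of this: component $n$ (initial capital $1/n^2$) makes the single prediction $\omega^{a+}$, reads off from that predicted path a window $[d,d+2^{-n}]$ lying strictly to the right of $a$ on which the predicted $\var^+$ of the log exceeds $n$, bets along the entire predicted path there (the known-path strategy from the proof of Theorem~\ref{thm:cadlag}) to multiply its capital by more than $e^n$, and sets all bets to $0$ the instant the prediction is falsified, so there are no wrong-direction losses to control. Since the prediction is valid on some nondegenerate right-neighbourhood of $a$, every sufficiently large $n$ succeeds before any given $t>a$, and $\sum_n e^n/n^2=\infty$ yields \eqref{eq:cadlag-plus}. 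Your multi-scale instinct (the ``resolution parameter'') is the right ingredient, but it must be combined with the stabilized prediction at $\inf I^+_{\omega}$ and with bet-along-the-predicted-path-then-stop, not with per-step sign guessing started at rational times.
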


\begin{remark}\label{rem:predictable}
  A stopping time $\tau$ is said to be \emph{predictable}
  if \eqref{eq:tau} holds with the two entries of $[0,\tau(\omega)]$ replaced by $[0,\tau(\omega))$
  (cf.\ \cite[Theorem~IV.99(a)]{Dellacherie/Meyer:1975}).
  Similarly, a function $X:\Omega\to\mathbb{R}$ is \emph{determined before} a stopping time $\tau$
  if \eqref{eq:X} holds with the two entries of $[0,\tau(\omega)]$ replaced by $[0,\tau(\omega))$
  (cf.\ \cite[Theorem~IV.99(b)]{Dellacherie/Meyer:1975}).
  We will impose this requirement on the \emph{relative bets}
  \[
    H_k(\omega)
    :=
    h_k(\omega)
    \omega(\tau_k)
    /
    \K^{G,c}_{\tau_k}(\omega)
  \]
  involved in a simple capital process~\eqref{eq:simple-capital};
  intuitively, $H_k(\omega)$ is the fraction of the trader's capital
  invested in the stock at time $\tau_k$.
  In terms of the relative bets,
  the simple capital process~\eqref{eq:simple-capital} can be rewritten as
  \begin{equation*}
    \K^{G,c}_t(\omega)
    =
    c
    \prod_{k=1}^{\infty}
    \left(
      1
      +
      H_k(\omega)
      \left(
        \frac{\omega(\tau_{k+1}\wedge t)}{\omega(\tau_k\wedge t)}
        -
        1
      \right)
    \right),
    \quad
    t\in[0,1].
  \end{equation*}
  Notice that this simple capital process is positive if and only if the relative bets $H_k$
  are always in the range $[0,1]$
  (since the stock price can shoot up or drop nearly to $0$ at any time).
  A positive capital process \eqref{eq:positive-capital} is \emph{predictable}
  if the component simple capital processes $\K^{G_n,c_n}$
  involve only predictable stopping times $\tau_k$ and relative bets $H_k$ determined before $\tau_k$.
  Theorem~\ref{thm:cadlag-plus} can be strengthened by requiring the positive capital process $\mathfrak{S}$
  to be, in addition, predictable.
  (Notice that predictability was automatic
  in the continuous case of Section~\ref{sec:continuous}.)
  This observation can be strengthened further.
  Let us say that a stopping time $\tau$ is \emph{strongly predictable}
  if, for every $\omega\in\Omega$, there exists $t<\tau(\omega)$ such that,
  for every $\omega'\in\Omega$,
  \begin{equation*}
    \omega|_{[0,t]}
    =
    \omega'|_{[0,t]}
    \Longrightarrow
    \tau(\omega)=\tau(\omega').
  \end{equation*}
  A function $X:\Omega\to\mathbb{R}$ is said to be \emph{determined strictly before} a stopping time $\tau$ if,
  for every $\omega\in\Omega$,
  there exists $t<\tau(\omega)$ such that,
  for every $\omega'\in\Omega$,
  \begin{equation*}
    \omega|_{[0,t]}
    =
    \omega'|_{[0,t]}
    \Longrightarrow
    X(\omega) = X(\omega').
  \end{equation*}
  A positive capital process \eqref{eq:positive-capital} is \emph{strongly predictable}
  if the component simple capital processes $\K^{G_n,c_n}$
  involve only strongly predictable stopping times $\tau_k$
  and relative bets $H_k$ determined strictly before $\tau_k$.
  Theorem~\ref{thm:cadlag-plus} can be further strengthened
  by requiring the positive capital process $\mathfrak{S}$
  to be strongly predictable.
  A simple modification of the proof of Theorem~\ref{thm:cadlag-plus}
  demonstrating this fact will be given in Remark~\ref{rem:predictable-proof}.
\end{remark}

\begin{remark}
  In the context of the previous remark,
  imposing the requirement of being determined before $\tau_k$
  on the bets $h_k$ rather than relative bets $H_k$
  would lead to a useless notion of a predictable positive capital process:
  all such processes would be constant.
  In the continuous case of Section~\ref{sec:continuous}
  (where $\omega$ is not required to be positive),
  the notion of a predictable positive capital process
  is equivalent to that of a positive capital process,
  but the notion of a strongly predictable positive capital process,
  even as given in the previous remark,
  is useless:
  again, any such process is a constant.
\end{remark}

Theorems~\ref{thm:cadlag} and~\ref{thm:cadlag-plus} show that the only non-trivial part of $\Omega$
(as far as the sign of $\UpProb$ is concerned) is
\begin{equation*}
  \Omega^{\rm nt}
  :=
  \left\{
    \omega\in\Omega
    \st
    I^-_{\omega} \ne \emptyset,
    I^+_{\omega} = \emptyset
  \right\}.
\end{equation*}
Namely,
\begin{equation*}
  \UpProb(E)
  \begin{cases}
    {}> 0 & \text{if $\exists\omega\in E:\var(\log\omega)<\infty$}\\
    {}= \UpProb(E\cap\Omega^{\rm nt}) & \text{otherwise}.
  \end{cases}
\end{equation*}
Theorem~\ref{thm:cadlag} and the following result (also to be proved in Section~\ref{sec:proofs})
show that this part is really non-trivial:
it has subsets of upper probability one and non-empty subsets (such as any singleton) of upper probability zero.
\begin{theorem}\label{thm:cadlag-minus}
  The set $\Omega^{\rm nt}$ has upper probability one.
  Moreover,
  \begin{equation}\label{eq:cadlag-minus}
    \UpProb
    \left\{
      \omega\in\Omega
      \st
      I^-_{\omega}=\{t\},
      I^+_{\omega}=\emptyset
    \right\}
    =
    1
  \end{equation}
  for each $t\in(0,1]$.
\end{theorem}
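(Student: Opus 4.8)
\emph{Reduction.}
I would first deduce the first assertion from the second: the set on the left of \eqref{eq:cadlag-minus} is contained in $\Omega^{\rm nt}$, the map $\UpProb$ is monotone, and $\UpProb(\Omega)\le1$ (witnessed by the constant positive capital process $\mathfrak{S}\equiv1$), so once \eqref{eq:cadlag-minus} is known for a fixed $t\in(0,1]$ the first assertion follows. Write $E_t$ for the set in \eqref{eq:cadlag-minus}. Since $\UpProb(E_t)\le\UpProb(\Omega)\le1$ is immediate, the whole content is the bound $\UpProb(E_t)\ge1$. The plan is to obtain it by constructing a probability measure $P$ on $\Omega$ with $P(E_t)=1$ and such that $\int\mathfrak{S}_1\,dP\le\mathfrak{S}_0$ for every positive capital process $\mathfrak{S}$; then $\mathfrak{S}_1\ge\III_{E_t}$ forces $\mathfrak{S}_0\ge\int\mathfrak{S}_1\,dP\ge P(E_t)=1$, as required.

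\emph{The measure.}
Fix times $0<t_1<t_2<\cdots$ with $t_n\uparrow t$ and reals $a_n\in(0,1)$ with $\sum_na_n=\infty$ but $\sum_na_n^2<\infty$ (say $a_n=1/(n+1)$), and let $\epsilon_1,\epsilon_2,\dots$ be i.i.d.\ signs, each $\pm1$ with probability $1/2$. Put $M_m:=\prod_{n=1}^m(1+a_n\epsilon_n)$, let $\omega$ be the random path with $\omega(s):=M_{N(s)}$ for $s<t$ and $\omega(s):=M_\infty:=\lim_mM_m$ for $s\in[t,1]$, where $N(s):=\#\{n:t_n\le s\}$, and let $P$ be the law of $\omega$. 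Since $\sum_na_n^2<\infty$, the product converges a.s.\ to some $M_\infty\in(0,\infty)$ with $\inf_mM_m>0$ a.s., so $\omega$ is a.s.\ a positive c\`adl\`ag path with $\inf\omega>0$, i.e.\ $\omega\in\Omega$. Every such path lies in $E_t$: the jumps of $\log\omega$ are the numbers $\log(1+a_n\epsilon_n)$ at the points $t_n$, with $\lvert\log(1+a_n\epsilon_n)\rvert\ge a_n/2$, so $\var(\log\omega|_{[t',t]})=\infty$ for every $t'<t$ (a divergent tail of $\sum_na_n$), giving $t\in I^-_\omega$; at any other $s$ a small two-sided neighbourhood contains only finitely many of the $t_n$ (which accumulate only at $t$) and $\log\omega$ is constant on $[t,1]$, so $\var(\log\omega,s-)<\infty$ and $\var(\log\omega,s+)<\infty$, and $\var(\log\omega,t+)=0$. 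Hence $I^-_\omega=\{t\}$, $I^+_\omega=\emptyset$, and $P(E_t)=1$.

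\emph{The key estimate.}
Here the point is that along the random path $\omega$ a (possibly non-measurable) simple capital process turns into a bona fide discrete-time martingale for the coin filtration $\mathcal{F}_m:=\sigma(\epsilon_1,\dots,\epsilon_m)$, because at any stopping time only finitely many coins have been revealed and every function on a finite set is measurable. Let $\K^{G,c}$ be a positive simple capital process with stopping times $\tau_k$ and bounded bets $h_k$. Using the stopping-time property and $t_n\uparrow t$, one checks that $\{\tau_k(\omega)<t_{m+1}\}\in\mathcal{F}_m$ and that $\tau_k(\omega)$ and $h_k(\omega)$ restricted to this event are functions of $(\epsilon_1,\dots,\epsilon_m)$. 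Consequently the numbers $V_0:=c$ and $V_m:=\K^{G,c}_{t_m}(\omega)$, $m\ge1$, satisfy $V_m=V_{m-1}+\widetilde{H}_m\,V_{m-1}\,a_m\epsilon_m$, where $\widetilde{H}_m:=h_k(\omega)M_{m-1}/V_{m-1}$ for the unique index $k$ (if any) with $\tau_k(\omega)<t_m\le\tau_{k+1}(\omega)$, and $\widetilde{H}_m:=0$ otherwise; this $\widetilde{H}_m$ is $\mathcal{F}_{m-1}$-measurable and finite, so $V$ is a nonnegative $(\mathcal{F}_m)$-martingale with $V_0=c$. Its a.s.\ limit is $\K^{G,c}_1(\omega)$ (for each $\omega$ the trading terminates in finite time and $\omega(1)=M_\infty$), so Fatou gives $\int\K^{G,c}_1\,dP=\int V_\infty\,dP\le c$. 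Summing over the nonnegative components of $\mathfrak{S}=\sum_n\K^{G_n,c_n}$ yields $\int\mathfrak{S}_1\,dP\le\sum_nc_n=\mathfrak{S}_0$, and combined with $\mathfrak{S}_1\ge\III_{E_t}$ and $P(E_t)=1$ this gives $\mathfrak{S}_0\ge1$; hence $\UpProb(E_t)\ge1$.

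\emph{Main obstacle.}
The delicate part is the bookkeeping in the last step: one must check with care that composing the non-measurable $\tau_k$ and $h_k$ with the map from coin sequences to paths yields genuinely $(\mathcal{F}_m)$-adapted, indeed predictable, quantities, and one must treat separately the bets placed at times $\ge t$ (which change no wealth, $\omega$ being constant on $[t,1]$) and the bet at the unique index with $\tau_k(\omega)<t\le\tau_{k+1}(\omega)$, which stakes on the entire remaining tail of coins at once. This last feature is exactly why a purely-jump martingale whose jumps accumulate only at $t$ from the left---rather than, say, a time-changed Brownian motion---is the convenient vehicle: it guarantees that at every time only a finite initial segment of the coins is in play.
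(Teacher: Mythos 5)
Your proposal is correct and follows essentially the same route as the paper's proof: the same multiplicative random walk $\omega_\xi(t_i)=\omega_\xi(t_{i-1})(1+\xi_i/(i+1))$ with jumps accumulating at $t$ from the left, the same discrete coin-flip probability space on $\{-1,1\}^\infty$ (so that at each $t_m$ only finitely many coins are in play and measurability is automatic), the same martingale-transform identification of the capital along these paths, and the same double application of Fatou. The only cosmetic difference is that you phrase the conclusion as $\int\mathfrak{S}_1\,dP\le\mathfrak{S}_0$ for an explicit measure $P$ with $P(E_t)=1$, whereas the paper argues by contradiction from the assumption $\mathfrak{S}_t(\omega_{\xi\to})\ge1+\epsilon$; these are equivalent.
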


\begin{remark}\label{rem:var-plus}
  The following modification of variation~\eqref{eq:variation} is often useful:
  \begin{equation}\label{eq:var-plus}
    \var^+(f)
    :=
    \sup
    \sum_{i=1}^n
    \left(
      f(t_i) - f(t_{i-1})
    \right)^+,
  \end{equation}
  where $u^+:=u\vee0$;
  we will allow $f:[0,1]\to[-\infty,\infty)$.
  Using this definition, \eqref{eq:bounded-variation} can be simplified
  (cf.\ \cite[the end of Section~2]{\CTV})
  to
  \begin{equation}\label{eq:e-var}
    \UpProb(\{\omega\})
    =
    e^{-\var^+(\log\omega)},
  \end{equation}
  and in this form the equality becomes true for any positive c\`adl\`ag $\omega$
  (with $\inf\omega=0$ allowed).
  The modified versions of \eqref{eq:var-}--\eqref{eq:I+} are:
  \begin{align}
    \var^+(f,t-)
    &:=
    \inf_{t'\in[0,t)}
    \var^+
    \left(
      f|_{[t',t]}
    \right),
    \label{eq:var-mod}\\
    \var^+(f,t+)
    &:=
    \inf_{t'\in(t,1]}
    \var^+
    \left(
      f|_{[t,t']}
    \right),
    \label{eq:var+mod}\\
    J^-_{\omega}
    &:=
    \left\{
      t\in[0,1]
      \st
      \var^+(\log\omega,t-) = \infty
    \right\}
    \subseteq
    (0,1],
    \label{eq:J-}\\
    J^+_{\omega}
    &:=
    \left\{
      t\in[0,1]
      \st
      \var^+(\log\omega,t+) = \infty
    \right\}
    \subseteq
    [0,1).
    \label{eq:J+}
  \end{align}
  Lemma~\ref{lem:var} and Theorems~\ref{thm:cadlag-plus} and~\ref{thm:cadlag-minus} will continue to hold
  if we replace all entries of $\var$ by $\var^+$ and all entries of $I$ by $J$.
\end{remark}

\begin{remark}\label{rem:cadlag}
  In this remark we allow the price path $\omega$ to take value zero.
  Redefine $\Omega$ as the set of all positive c\`adl\`ag functions $\omega:[0,1]\to[0,\infty)$,
  and consider the partition of $\Omega$ into the following three subsets:
  \begin{align*}
    A &:= \{\omega\in\Omega \st \inf\omega>0\},\\
    B &:= \{\omega\in\Omega \st \exists t\in[0,1]:\sign\omega=\III_{[0,t)}\},\\
    C &:= \Omega\setminus(A\cup B),
  \end{align*}
  where
  \begin{equation*}
    \sign u
    :=
    \begin{cases}
      1 & \text{if $u>0$}\\
      0 & \text{if $u=0$}\\
      -1 & \text{if $u<0$}.
    \end{cases}
  \end{equation*}
  In other words, $A$ is $\Omega$ as defined in the main part of this section,
  $B$ is the set of all $\omega\in\Omega$ that become zero at some point $t$ in time and then never recover,
  and $C$ is the set of all $\omega\in\Omega$ such that $\omega(t_1-)\wedge\omega(t_1)=0$ and $\omega(t_2)>0$ for some $t_1<t_2$.
  Theorems~\ref{thm:cadlag-plus} and~\ref{thm:cadlag-minus},
  as stated originally or as modified in the previous remark,
  describe the sign of $\UpProb$ for subsets of $A$.
  We can ignore the price paths in $C$:
  $\UpProb(C)=0$ and, therefore,
  for any $E\subseteq\Omega$,
  \begin{equation*}
    \UpProb(E)
    =
    \UpProb(E\cap(A\cup B)).
  \end{equation*}
  In the rest of this remark we will allow not only $f:[0,1]\to[-\infty,\infty)$
  in \eqref{eq:var-plus}, \eqref{eq:var-mod}, and \eqref{eq:var+mod},
  but also any $\omega\in\Omega$ (for the new definition of $\Omega$) in \eqref{eq:J-}--\eqref{eq:J+}.
  Theorem~\ref{thm:cadlag-plus} will continue to hold for $\omega\in A\cup B$
  if we replace $\var$ by $\var^+$ and $I$ by $J$
  (as shown by the same argument, given in Section~\ref{sec:proofs}).
  Equation~\eqref{eq:cadlag-minus} in Theorem~\ref{thm:cadlag-minus}
  can be rewritten as
  \begin{equation*}
    \UpProb
    \left\{
      \omega\in A
      \st
      J^-_{\omega}=\{t\},
      J^+_{\omega}=\emptyset
    \right\}
    =
    1.
  \end{equation*}
\end{remark}

\section{Positive price paths}
\label{sec:positive}

Let us now redefine $\Omega$ to be the set of all positive functions $\omega:[0,1]\to[0,\infty)$
satisfying $\inf\omega>0$
(without any continuity requirements).
The definitions of adapted processes, stopping times, etc., again stay as in Section~\ref{sec:continuous}.
Theorems~\ref{thm:cadlag-plus} and~\ref{thm:cadlag-minus} will still hold,
as shown by the same arguments in the next section.
Remark~\ref{rem:predictable} will still hold
with the same definitions of predictable and strongly predictable positive capital processes.
Remark~\ref{rem:cadlag} will still hold for $\Omega$
the set of all positive functions $\omega:[0,1]\to[0,\infty)$.

\section{Proofs of the theorems}
\label{sec:proofs}

The next result (Lemma~\ref{lem:prediction}) is applicable to all $\Omega$
considered in Sections~\ref{sec:continuous}--\ref{sec:positive}.
Fix a well-order $\preceq$ of $\Omega$, which exists by the Zermelo theorem
(one of the alternative forms of the Axiom of Choice; see, e.g., \cite[Theorem~5.1]{Jech:2003}).
Let $\omega^a$, where $\omega\in\Omega$ and $a\in[0,1]$,
be the $\preceq$-smallest element of $\Omega$ such that $\omega^a|_{[0,a]}=\omega|_{[0,a]}$.
Intuitively, using $\omega^a$ as the prediction at time $a$ for $\omega$ is an instance of Ockham's razor:
out of all hypotheses compatible with the available data $\omega|_{[0,a]}$ we choose the simplest one,
where simplicity is measured by the chosen well-order.

For any $\omega\in\Omega$ set
\begin{align}
  W_{\omega}
  &:=
  \left\{
    t\in[0,1]
    \st
    \forall t'\in(t,1]: \omega^{t'}\ne\omega^t
  \right\}
  \notag\\
  &=
  \left\{
    t\in[0,1]
    \st
    \forall t'\in(t,1]: \omega^{t'}\succ\omega^t
  \right\}
  \notag\\
  &=
  \left\{
    t\in[0,1]
    \st
    \forall t'\in(t,1]: \omega^t|_{[0,t']}\ne\omega|_{[0,t']}
  \right\}
  \label{eq:W}
\end{align}
(in particular, $1\in W_{\omega}$).
The following lemma says, intuitively, that short-term prediction of the future is usually possible.
\begin{lemma}\label{lem:prediction}
  \begin{enumerate}
  \item\label{it:small}
    The set $W_{\omega}$ is well-ordered by $\le$.
    (Therefore, each of its points is isolated on the right,
    which implies that $W_{\omega}$ is countable and nowhere dense.)
  \item\label{it:not-W}
    If $t\in[0,1]\setminus W_{\omega}$,
    there exists $t'>t$ such that $\omega^t|_{[t,t']}=\omega|_{[t,t']}$.
  \item\label{it:W}
    If $t\in[0,1)$,
    there exists $t'>t$ such that $\omega^{s}|_{[t,t']}=\omega|_{[t,t']}$
    for all $s\in(t,t')$.
  \end{enumerate}
\end{lemma}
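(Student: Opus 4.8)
The engine behind all three parts is the monotonicity of $a\mapsto\omega^a$: if $a\le b$, then every element of $\Omega$ agreeing with $\omega$ on $[0,b]$ also agrees with it on $[0,a]$, so the set of candidates for $\omega^b$ is contained in the set of candidates for $\omega^a$, and passing to $\preceq$-least elements reverses the inclusion, giving $\omega^a\preceq\omega^b$. From this I would record two equivalences, valid for $t<t'$: first, $\omega^t$ is itself a candidate for $\omega^{t'}$ exactly when $\omega^t|_{[0,t']}=\omega|_{[0,t']}$, which together with monotonicity yields $\omega^{t'}=\omega^t\iff\omega^t|_{[0,t']}=\omega|_{[0,t']}$; second, since $\omega^{t'}\succeq\omega^t$ always, $\omega^{t'}\ne\omega^t\iff\omega^{t'}\succ\omega^t$. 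These two equivalences are precisely what identifies the three displayed descriptions of $W_\omega$, and they are used throughout.

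For part~\ref{it:small} I would first observe that $t\mapsto\omega^t$ is \emph{strictly} $\preceq$-increasing on $W_\omega$: if $t_1<t_2$ both lie in $W_\omega$, then $\omega^{t_1}\preceq\omega^{t_2}$ by monotonicity, and $\omega^{t_1}=\omega^{t_2}$ is impossible because membership of $t_1$ in $W_\omega$ forces $\omega^{t_1}|_{[0,t_2]}\ne\omega|_{[0,t_2]}$. Hence, for any nonempty $A\subseteq W_\omega$, the set $\{\omega^t:t\in A\}\subseteq\Omega$ has a $\preceq$-least element $\omega^{t_0}$ with $t_0\in A$, and strict monotonicity forces $t_0=\min A$; so $W_\omega$ is well-ordered by $\le$. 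Right-isolation of each $w\in W_\omega$ with $w<1$ is then automatic (take the successor $w^+:=\min\{y\in W_\omega:y>w\}$, so that $(w,w^+)\cap W_\omega=\emptyset$), and the maximum $1$ is right-isolated trivially. Countability follows by injecting $W_\omega\setminus\{1\}$ into $\mathbb{Q}$ via a choice of rational in each gap $(w,w^+)$; nowhere-density follows because an arbitrary interval $(a,b)$ either is disjoint from $W_\omega$ or contains $w:=\min\bigl(W_\omega\cap(a,b)\bigr)$, just to the right of which lies a $W_\omega$-free subinterval of $(a,b)$.

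Part~\ref{it:not-W} is then only unwinding: $t\notin W_\omega$ means $\omega^{t'}=\omega^t$ for some $t'\in(t,1]$, i.e.\ $\omega^t|_{[0,t']}=\omega|_{[0,t']}$ by the first equivalence, and restriction to $[t,t']$ finishes it. For part~\ref{it:W}, fix $t\in[0,1)$; the nonempty set $\{\omega^s:s\in(t,1]\}\subseteq\Omega$ has a $\preceq$-least element, attained at some $s_0\in(t,1]$, and I put $\psi:=\omega^{s_0}$ and $t':=s_0$. For every $s\in(t,s_0]$ we have $\omega^s\preceq\omega^{s_0}=\psi$ by monotonicity and $\psi\preceq\omega^s$ by minimality, hence $\omega^s=\psi$; moreover $\psi=\omega^{s_0}$ agrees with $\omega$ on $[0,s_0]$, in particular on $[t,s_0]$. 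Therefore $\omega^s|_{[t,t']}=\psi|_{[t,t']}=\omega|_{[t,t']}$ for all $s\in(t,t')$, as required.

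The main point — and essentially the only step that is more than bookkeeping — is the observation in part~\ref{it:W} that, because $\preceq$ is a well-order, the value $\omega^s$ over a right-neighbourhood of $t$ attains a $\preceq$-minimum at some $s_0>t$ and is then constant on all of $(t,s_0]$; once this is in hand, the desired $t'$ is just that $s_0$. Everything else reduces to the monotonicity of $a\mapsto\omega^a$ together with the equivalent reformulations of membership in $W_\omega$.
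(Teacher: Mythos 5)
Your proof is correct and follows essentially the same route as the paper's: part~\ref{it:small} rests on the strict $\preceq$-monotonicity of $t\mapsto\omega^t$ on $W_{\omega}$ together with $\preceq$ being a well-order, part~\ref{it:not-W} is definition-unwinding, and part~\ref{it:W} takes the $\preceq$-least element of $\{\omega^s : s\in(t,1]\}$ exactly as the paper does. The only (harmless) difference is that in part~\ref{it:small} you verify the least-element property of $W_{\omega}$ directly, whereas the paper rules out infinite strictly decreasing chains; your explicit record of the monotonicity of $a\mapsto\omega^a$ and of the equivalence of the three descriptions of $W_{\omega}$ is a welcome addition the paper leaves implicit.
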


Part~\ref{it:small} of Lemma~\ref{lem:prediction} says, informally, that the set $W_{\omega}$ is small.
Part~\ref{it:not-W} says that at each time point $t$
outside the small set $W_{\omega}$ the Ockham prediction system
that outputs $\omega^t$ as its prediction is correct (over some non-trivial time interval).
And part~\ref{it:W} says that even at time points $t$ in $W_{\omega}$
the Ockham prediction system becomes correct (in the same weak sense) immediately after time~$t$.

\begin{proof}
  Let us first check that $W_{\omega}$ is well-ordered by $\le$.
  Suppose there is an infinite strictly decreasing chain $t_1>t_2>\cdots$ of elements of $W_{\omega}$.
  Then we have $\omega^{t_1}\succ\omega^{t_2}\succ\cdots$,
  which contradicts $\preceq$ being a well-order.

  Each point $t\in W_{\omega}\setminus\{1\}$ is isolated on the right since $W_{\omega}\cap(t,t')=\emptyset$,
  where $t'$ is the successor of $t$.
  Therefore, $W_{\omega}$ is nowhere dense.
  To check that $W_{\omega}$ is countable,
  map each $t\in W_{\omega}\setminus\{1\}$ to a rational number in the interval $(t,t')$,
  where $t'$ is the successor of $t$;
  this mapping is an injection.

  As part~\ref{it:not-W} is obvious (and essentially asserted in~\eqref{eq:W}),
  let us check part~\ref{it:W}.
  Suppose $t\in[0,1)$.
  The set of all $\omega^s$, $s\in(t,1]$,
  has a smallest element $\omega^{t'}$, where $t'\in(t,1]$.
  It remains to notice that $\omega^s=\omega^{t'}$ for all $s\in(t,t')$.
\end{proof}

\begin{remark}
  It might be tempting to conjecture that, for any $t\in W_{\omega}\setminus\{1\}$,
  the function $s\mapsto\omega^s$ does not depend on $s\in(t,t')$,
  where $t'$ is the successor of $t$.
  While this statement is true for $\Omega=C[0,1]$,
  simple examples show that it is wrong in general:
  see Lemma~\ref{lem:counter-example} in Appendix~\ref{app:prediction}.
\end{remark}

\subsection{Proof of Theorem~\protect\ref{thm:continuous}}

For each pair of rational numbers $(a,b)$ such that $0<a<b<1$ fix a strictly positive weight $w_{a,b}>0$
such that $\sum_{a,b} w_{a,b} = 1$,
the sum being over all such pairs.
For each such pair $(a,b)$ we will define a positive capital process $\K^{a,b}$ such that $\K^{a,b}_0=1$
and $\K^{a,b}_b(\omega)=\infty$ when $\omega|_{[a,b]}=\omega^a|_{[a,b]}$ and $\omega|_{[a,b]}$ is not constant.
Let us check that the process
\begin{equation}\label{eq:K}
  \mathfrak{S}
  :=
  \sum_{a,b}
  w_{a,b}
  \K^{a,b}
\end{equation}
will then achieve our goal \eqref{eq:continuous}.

Let $\omega\in\Omega$ and $c$ be the largest $t\in[0,1]$ such that $\omega|_{[0,t]}$ is constant
(the supremum is attained by the continuity of $\omega$).
Assuming that $\omega$ is not constant, we have $c<1$.
Set $\omega^{c+}:=\omega^t$ for $t\in(c,c+\epsilon)$ for a sufficiently small $\epsilon>0$
(namely, such that $t\mapsto\omega^t$ is constant over the interval $(c,c+\epsilon)$;
such an $\epsilon$ exists by Lemma~\ref{lem:prediction}).
Choose $d\in(c,1)$ such that $\omega^d=\omega^{c+}$
(and, therefore, $\omega|_{(c,d]}=\omega^{c+}|_{(c,d]}$ and $\omega^t=\omega^{c+}$ for all $t\in(c,d]$).
Take rational $a,b\in(c,d)$ such that $a<b$ and $\omega|_{[a,b]}$ is not constant;
since $\K^{a,b}_b(\omega)=\infty$, \eqref{eq:K} gives $\mathfrak{S}_b(\omega)=\infty$;
and since $b$ can be arbitrarily close to $c$, we obtain \eqref{eq:continuous}.

It remains to construct such a positive capital process $\K^{a,b}$ for fixed $a$ and $b$.
From now until the end of this proof,
$\omega$ is a generic element of $\Omega$.
For each $n\in\{1,2,\ldots\}$,
let $\mathbb{D}_n:=\{k2^{-n}\st k\in\mathbb{Z}\}$
and define a sequence of stopping times $T^n_k$, $k=-1,0,1,2,\ldots$, inductively
by $T^n_{-1}:=a$,
\begin{align*}
  T^n_0(\omega)
  &:=
  \inf
  \left\{
    t\in[a,b]
    \st
    \omega(t)\in\mathbb{D}_n
  \right\},\\
  T^n_k(\omega)
  &:=
  \inf
  \left\{
    t\in[T^n_{k-1}(\omega),b]
    \st
    \omega(t)\in\mathbb{D}_n
    \And
    \omega(t)\ne\omega(T^n_{k-1})
  \right\},
  \enspace
  k=1,2,\ldots,
\end{align*}
where we set $\inf\emptyset:=b$.
For each $n=1,2,\ldots$,
define a simple capital process $\K^n$ as the capital process of the simple trading strategy
with the stopping times
\begin{equation*}
  \omega\in\Omega\mapsto\tau^n_k(\omega):=T^n_k(\omega)\wedge T^n_k(\omega^a),
  \quad
  k=0,1,\ldots,
\end{equation*}
the corresponding bets $h^n_k$ that are defined as
\begin{equation*}
  h^n_k(\omega)
  :=
  \begin{cases}
    2^{2n} \left(\omega^a(\tau^n_{k+1}(\omega^a))-\omega(\tau^n_k)\right)
      & \text{if $\omega^{\tau^n_k(\omega)}=\omega^a$ and $\tau^n_k(\omega)<b$}\\
    0 & \text{otherwise},
  \end{cases}
\end{equation*}
and an initial capital of 1.
Since the increments of this simple capital process never exceed~1 in absolute value
(and trading stops as soon as the prediction $\omega^a$ is falsified),
its initial capital of 1 ensures that it always stays positive.
The final value $\K^n_b(\omega)$ is $\Omega(2^{n})$
(to use Knuth's asymptotic notation)
unless $\omega|_{[a,b]}\ne\omega^a|_{[a,b]}$ or $\omega|_{[a,b]}$ is constant.
If we now set
\begin{equation*}
  \K^{a,b}
  :=
  \sum_{n=1}^{\infty}
  n^{-2}
  \K^{n},
\end{equation*}
we will obtain $\K^{a,b}_a<\infty$ and $\K^{a,b}_b(\omega)=\infty$ unless
$\omega|_{[a,b]}\ne\omega^a|_{[a,b]}$ or $\omega|_{[a,b]}$ is constant.
This completes the proof of Theorem~\ref{thm:continuous}.

\subsection{Proof of Theorem~\protect\ref{thm:cadlag}}
\label{subsec:thm-cadlag}

We will follow the proof of Proposition~2 in \cite{\CTV}
(that proposition considers measurable strategies,
but the assumption of measurability is not essential there).
Let us check the equivalent statement~\eqref{eq:e-var}.
If $c<\var^+(\log\omega)$,
we can find a partition $0=t_0<t_1<\cdots<t_n=1$ of $[0,1]$
such that
\begin{equation*}
  \sup
  \sum_{i=1}^n
  \left(
    \log\omega(t_i) - \log\omega(t_{i-1})
  \right)^+
  >
  c
\end{equation*}
(cf.\ \eqref{eq:var-plus}).
By investing all the available capital into $\omega$ at time $t_{i-1}$
whenever
$
  (
    \log\omega(t_i) - \log\omega(t_{i-1})
  )^+
  >
  0
$
(i.e., whenever $\omega(t_i)>\omega(t_{i-1})$),
the trader can turn 1 into at least $e^{c}$.
This proves the inequality $\le$ in \eqref{eq:e-var}.
And it is clear that this is the best the trader can do without risking bankruptcy.

For further (obvious) details, see the proof of Proposition~2 in \cite{\CTV}.

\subsection{Proof of Theorem~\protect\ref{thm:cadlag-plus}}
\label{subsec:thm-cadlag-plus}

The proof will use the fact that $\inf I^+_{\omega}\in I^+_{\omega}$ when $I^+_{\omega}\ne\emptyset$.
Notice that $I^+_{\omega}=J^+_{\omega}$,
where $J^+_{\omega}$ is defined in Remark~\ref{rem:var-plus}.

In this subsection we construct a positive capital process $\mathfrak{S}$ such that $\mathfrak{S}_0<\infty$
and $\mathfrak{S}_1=\infty$ whenever $I^+_{\omega}\ne\emptyset$;
moreover, it will satisfy \eqref{eq:cadlag-plus}.
Namely, we define $\mathfrak{S}$ via its representation~\eqref{eq:positive-capital}
with the components $\K^{G_n,c_n}(\omega)$, where $\omega$ is a generic element of $\Omega$,
defined as follows:
\begin{itemize}
\item
  $c_n=1/n^2$ (which ensures that the total initial capital $\sum_n 1/n^2$ is finite);
  $G_n$ will consist of stopping times denoted as $\tau^n_1,\tau^n_2,\ldots$
  and bets denoted as $h^n_1,h^n_2,\ldots$;
\item
  if $I^+_{\omega}=\emptyset$,
  set $\tau^n_1(\omega)=\tau^n_2(\omega)=\cdots=1$ and $h^n_1(\omega)=h^n_2(\omega)=\cdots=0$
  (intuitively, $G_n$ never bets,
  which makes this part of the definition non-anticipatory);
  in the rest of this definition we will assume that $I^+_{\omega}\ne\emptyset$ and, therefore, $\inf I^+_{\omega}<1$;
\item
  set $a:=\inf I^+_{\omega}$; we know that $a\in I^+_{\omega}$ and $a<1$;
\item
  in view of Lemma~\ref{lem:prediction},
  set $\omega^{a+}:=\omega^t$ for $t\in(a,a+\epsilon)$ for a sufficiently small $\epsilon$
  (such that $t\mapsto\omega^t$ does not depend on $t\in(a,a+\epsilon)$);
\item
  define
  \begin{equation*}
    c
    :=
    \inf
    \left\{
      t \in (a,(a+2^{-n})\wedge 1]
      \st
      \var^+\left(\log\omega^{a+}|_{[t,t+2^{-n}]}\right) \le n
    \right\}
  \end{equation*}
  (with $\inf\emptyset:=(a+2^{-n})\wedge 1$);
\item
  set $d:=(a+c)/2$ and define
  \begin{equation}\label{eq:tau-n-k-a}
    \tau^n_k(\omega^{a+})\in[d,(d+2^{-n})\wedge1]\cup\{1\}
  \end{equation}
  and $h^n_k(\omega^{a+})$, $k=1,2,\ldots$,
  in such a way that
  \begin{equation}\label{eq:h-n-k-a}
    h^n_k(\omega^{a+})
    \in
    \left[
      0,
      \frac{\K^{G_n,c_n}_{\tau^n_k(\omega^{a+})}(\omega^{a+})}{\omega^{a+}(\tau^n_k)}
    \right]
  \end{equation}
  (which is required implicitly by the definition of the positivity of $\mathfrak{S}$
  and is equivalent to the relative bets being in the range $[0,1]$)
  and
  \begin{equation}\label{eq:K-greater}
    \K^{G_n,c_n}_{(d+2^{-n})\wedge1}(\omega^{a+}) > e^n c_n;
  \end{equation}
  the latter can be done because of
  \begin{equation*}
    \var^+
    \left(
      \log\omega^{a+}|_{\left[d,d+2^{-n}\right]}
    \right)
    >
    n
  \end{equation*}
  and the fact that \eqref{eq:e-var} remains true
  if only positive simple capital processes are used as $\mathfrak{S}$
  in the definition \eqref{eq:upper-probability} of $\UpProb$
  (as can be seen from the proof in Subsection~\ref{subsec:thm-cadlag});
\item
  set
  \begin{equation}\label{eq:tau-n-k}
    \tau^n_k(\omega)
    :=
    \begin{cases}
      \tau^n_k(\omega^{a+}) & \text{if $\omega|_{[0,\tau^n_k(\omega^{a+})]}=\omega^{a+}|_{[0,\tau^n_k(\omega^{a+})]}$}\\
      1 & \text{otherwise}
    \end{cases}
  \end{equation}
  and
  \begin{equation*}
    h^n_k(\omega)
    :=
    \begin{cases}
      h^n_k(\omega^{a+}) & \text{if $\omega|_{[0,\tau^n_k(\omega^{a+})]}=\omega^{a+}|_{[0,\tau^n_k(\omega^{a+})]}$}\\
      0 & \text{otherwise}.
    \end{cases}
  \end{equation*}
\end{itemize}

Let us check \eqref{eq:cadlag-plus}.
Suppose the antecedent of \eqref{eq:cadlag-plus} holds for given $t\in[0,1]$ and $\omega\in\Omega$.
Using the notation introduced in the previous paragraph
(and suppressing the dependence on $\omega$ and $n$, as before),
we can see that $t>a$.
From some $n$ on we will have $d+2^{-n}<t$
and $\omega^s=\omega^{a+}$ for all $s\in(a,d+2^{-n})$,
and so the divergence of the series $\sum_n e^n/n^2$ implies that $\mathfrak{S}_t(\omega)=\infty$.

\begin{remark}\label{rem:predictable-proof}
  Let us check that the proof of Theorem~\ref{thm:cadlag-plus} can be modified
  to prove the stronger statement in Remark~\ref{rem:predictable}.
  Notice that, without loss of generality,
  we can replace the interval $[0,\ldots]$ in~\eqref{eq:h-n-k-a}
  by the two-element set $\{0,\ldots\}$ consisting of its end-points
  (see the proof in Subsection~\ref{subsec:thm-cadlag});
  this will ensure that the relative bets always satisfy $H^n_k(\omega)\in\{0,1\}$.
  In addition to the stopping times~\eqref{eq:tau-n-k-a} and bets~\eqref{eq:h-n-k-a},
  define stopping times $\sigma^n_k$ in such a way that:
  \begin{itemize}
  \item
    $\sigma^n_k(\omega^{a+})\in[\tau^n_k(\omega^{a+}),\tau^n_{k+1}(\omega^{a+})]$
    and
    $\sigma^n_k(\omega^{a+})\in(\tau^n_k(\omega^{a+}),\tau^n_{k+1}(\omega^{a+}))$
    when $\tau^n_k(\omega^{a+})<\tau^n_{k+1}(\omega^{a+})$;
  \item
    $\sigma^n_k(\omega^{a+})$ are so close to $\tau^n_k(\omega^{a+})$
    that \eqref{eq:K-greater} still holds
    when we replace the stopping times $\tau^n_k$ by $\sigma^n_k$
    in the definition of $G_n$
    (with the relative bets corresponding to~\eqref{eq:h-n-k-a} unchanged);
  \item
    for an arbitrary $\omega\in\Omega$,
    \begin{equation*}
      \sigma^n_k(\omega)
      :=
      \begin{cases}
        \sigma^n_k(\omega^{a+}) & \text{if $\omega|_{[0,\tau^n_k(\omega^{a+})]}=\omega^{a+}|_{[0,\tau^n_k(\omega^{a+})]}$}\\
        1 & \text{otherwise}
      \end{cases}
    \end{equation*}
    (cf.\ \eqref{eq:tau-n-k}).
  \end{itemize}
  After changing all $G_n$ in this way, we will obtain a positive capital process
  that is strongly predictable and still satisfies~\eqref{eq:cadlag-plus}.
\end{remark}

\subsection{Proof of Theorem~\protect\ref{thm:cadlag-minus}}
\label{subsec:thm-cadlag-minus}

We will be proving~\eqref{eq:cadlag-minus} for a fixed $t\in(0,1]$.
Fix a strictly increasing sequence $t_1,t_2,\ldots$ of numbers in the interval $(0,1)$ that converge to $t$,
$t_i\uparrow t$ as $i\to\infty$;
set $t_0:=0$.
Let $\Xi:=\{-1,1\}^{\infty}$.
For each sequence $\xi=(\xi_1,\xi_2,\ldots)\in\Xi$ define $\omega_{\xi}$
as the c\`adl\`ag function on $[0,t)$ that is constant on each of the intervals $[t_{i-1},t_i)$,
$i=1,2,\ldots$,
and satisfies $\omega_{\xi}(0):=1$ and
\begin{equation}\label{eq:omega-xi}
  \omega_{\xi}(t_i)
  :=
  \omega_{\xi}(t_{i-1})
  \left(
    1 + \frac{\xi_i}{i+1}
  \right),
  \quad
  i=1,2,\ldots\,.
\end{equation}

We will be particularly interested in $\omega_{\xi}$ such that $\lim_{i\to\infty}\omega_{\xi}(t_i)$ exists in $(0,\infty)$;
we can then extend $\omega_{\xi}$ to an element $\omega_{\xi\to}$ of $\Omega$ that is constant over $[t,1]$.
We will call such $\omega_{\xi}$ \emph{extendable};
for them $\omega_{\xi\to}$ exists and is an element of the set
$
  \{
    \omega\in\Omega
    \st
    I^-_{\omega}=\{t\},
    I^+_{\omega}=\emptyset
  \}
$
in~\eqref{eq:cadlag-minus}.

Let us check that no positive capital process $\mathfrak{S}$ grows by a factor of at least $1+\epsilon$,
where $\epsilon>0$ is a given constant,
on each extendable $\omega_{\xi}$.
Suppose, on the contrary, that a given $\mathfrak{S}$ satisfies $\mathfrak{S}_0=1$
and
\begin{equation}\label{eq:assumption}
  \mathfrak{S}_t(\omega_{\xi\to})\ge1+\epsilon
\end{equation}
for all extendable $\omega_{\xi}$.
Consider any representation of $\mathfrak{S}$ in the form~\eqref{eq:positive-capital}.

This proof uses methods of measure-theoretic probability;
our probability space is $\Xi$
equipped with the canonical filtration $(\mathcal{F}_i)$
and the power of the uniform probability measure on $\{-1,1\}$:
$\mathcal{F}_i$ consists of all subsets of $\Xi$ that are unions of cylinders
$\{(\xi_1,\xi_2,\ldots)\in\Xi\st\xi_1=c_1,\ldots,\xi_i=c_i\}$,
and the measure of each such cylinder is $2^{-i}$.
This is a discrete probability space without any measurability issues
(the simple idea of using such a ``poor'' probability space was used earlier in, e.g., \cite[Section~4.3]{Shafer/Vovk:2001}).

To simplify formulas we use the notation
$\omega_i:=\omega_{\xi}(t_i)$ and $K_i:=\mathfrak{S}_{t_i}(\omega)$
for any $\omega\in\Omega$ that agrees with $\omega_{\xi}$ over the interval $[0,t_i]$
(there is no dependence on such $\omega$,
and the dependence on $\xi$ is suppressed, as usual in measure-theoretic probability).
According to \eqref{eq:omega-xi}, $\omega_i$ is a martingale.
Let us check that
\begin{equation}\label{eq:K-transform}
  K_i = K_{i-1} + b_{i}(\omega_i-\omega_{i-1}),
\end{equation}
where $b_{i}$ is the total bet of all $G_n$ immediately before time $t_{i}$
after observing $\omega_{\xi}|_{[0,t_{i})}$
(the only issue in this check is convergence).
Formally,
\begin{equation}\label{eq:series}
  b_i
  :=
  \sum_{n=1}^{\infty}
  h^n_{k(n,i)},
\end{equation}
where
\begin{equation*}
  k(n,i)
  :=
  \max\{k\st\tau^n_k<t_i\}
\end{equation*}
and $(\tau^n_{k})$ and $(h^n_{k})$ are the stopping times and bets of $G_n$.
The series \eqref{eq:series} converges in $[0,\infty]$
as its terms are positive (to ensure the positivity of each $\K^{G_n,c_n}$).
Since the capital process $\mathfrak{S}$ is positive and $\omega$ can drop almost to $0$ at any time,
we have $b_{i}\in[0,K_{i-1}/\omega_{i-1}]$
(this follows from the analogous inclusions for the component simple capital processes).
We can see that \eqref{eq:K-transform} is indeed true.
Let us define $\alpha_{i}\in[0,1]$ by the condition
\begin{equation}\label{eq:alpha}
  b_{i}=\alpha_{i}K_{i-1}/\omega_{i-1}.
\end{equation}

Being a martingale transform of $\omega_i$, $K_i$ is also a martingale.
Combining~\eqref{eq:omega-xi}, \eqref{eq:K-transform}, and~\eqref{eq:alpha},
we can see that the recurrences for the two martingales are
\begin{align*}
  \omega_i
  &=
  \omega_{i-1}
  \left(
    1 + \frac{\xi_i}{i+1}
  \right),\\
  K_i
  &=
  K_{i-1}
  \left(
    1 + \alpha_{i} \frac{\xi_i}{i+1}
  \right),
  \quad
  i=1,2,\ldots\,.
\end{align*}
Let us check that $\log\omega_i$ converge in $\mathbb{R}$ for almost all $\xi$ as $i\to\infty$.
This follows from Taylor's formula
\begin{equation*}
  \log\omega_i - \log\omega_{i-1}
  =
  \frac{\xi_i}{i+1}
  -
  \frac12
  \frac{1}{(1+\theta_i\xi_i/(i+1))^2}
  \frac{1}{(i+1)^2}
\end{equation*}
(where $\theta_i\in[0,1]$),
the almost sure convergence of $\sum_i \xi_i/(i+1)$
(which follows from Kolmogorov's two series theorem),
and the convergence of $\sum_i (i+1)^{-2}$.
We can see that $\omega_{\xi}$ is almost surely extendable.
In the same way we can demonstrate the convergence of $\log K_i$ in $\mathbb{R}$,
but we will not need it.

By Fatou's lemma, we have,
for any $\xi\in\Xi$ with extendable $\omega_{\xi}$,
\begin{multline*}
  \mathfrak{S}_t
  (\omega_{\xi\to})
  =
  \sum_{n=1}^{\infty}
  \K^{G_n,c_n}_t
  (\omega_{\xi\to})
  =
  \sum_{n=1}^{\infty}
  \lim_{i\to\infty}
  \K^{G_n,c_n}_{t_i}
  (\omega_{\xi})\\
  \le
  \liminf_{i\to\infty}
  \sum_{n=1}^{\infty}
  \K^{G_n,c_n}_{t_i}
  (\omega_{\xi})
  =
  \liminf_{i\to\infty}
  K_i.
\end{multline*}
Another application of Fatou's lemma and the fact that almost all $\omega_{\xi}$ are extendable
show that the chain
\begin{equation*}
  \Expect
  \mathfrak{S}_t
  (\omega_{\xi\to})
  \le
  \Expect
  \liminf_{i\to\infty}
  K_i
  \le
  \liminf_{i\to\infty}
  \Expect
  K_i
  =
  1
\end{equation*}
is well defined and correct.
This contradicts our assumption~\eqref{eq:assumption}.

\section{Conclusion}

This paper shows that some assumptions of regularity
(apart from being non-anticipative, such as universal measurability)
should be imposed on continuous-time trading strategies
even in game-theoretic probability.
This is not a serious problem in applications
since only computable trading strategies can be of practical interest,
and computable trading strategies will be measurable
under any reasonable computational model.

There are many interesting directions of further research, such as:
\begin{itemize}
\item
  Is it possible to extend Theorems~\ref{thm:continuous}, \ref{thm:cadlag-plus}, and~\ref{thm:cadlag-minus}
  to the case where only the most recent past is known to the trader,
  as in \cite[p.~vii and Section~7.3]{Hardin/Taylor:2013} and \cite[Section~5]{Hardin/Taylor:2008}?
\item
  Is it possible to extend Theorems~\ref{thm:continuous}, \ref{thm:cadlag-plus}, and~\ref{thm:cadlag-minus}
  to the case of the trader without a synchronized watch
  (see \cite[Section~7.7]{Hardin/Taylor:2013} or \cite{Bajpai/Velleman:2016})?
\item
  The construction in Subsection~\ref{subsec:thm-cadlag-minus}
  produces a set $E\subseteq\Omega$ (consisting of all $\omega_{\xi\to}$ for $\xi\in\Xi$ with extendable $\omega_{\xi}$)
  that satisfies both $\UpProb(E)=1$ and $\var(\log\omega)=\infty$ for all $\omega\in E$.
  However, $\vi(\log\omega)=1$ for all $\omega\in E$
  (see, e.g., \cite[Subsection~4.2]{\CTIV} for the definition of variation index $\vi$).
  Do $E\subseteq\Omega$ with $\UpProb(E)=1$ and $\inf_{\omega\in E}\vi(\log\omega)>1$ exist?
\end{itemize}

\subsection*{Acknowledgements}

I am grateful to the participants in the workshop
``Pathwise methods, functional calculus and applications in mathematical finance''
(Vienna, 4--6 April 2016) for their comments.
Thanks to Yuri Gurevich for numerous discussions on a wide range of topics
(as a result of which I discovered Hardin and Taylor's work on hat puzzles)
and to Yuri Kalnishkan for a useful discussion on the topic of this paper.
The anonymous reviewers' thoughtful comments made me change the presentation and emphasis;
they also prompted me to include Appendix~A.

  This work has been supported by the Air Force Office of Scientific Research
  (project ``Semantic Completions'').

\appendix
\section{Fine structure of prediction with the Axiom of Choice}
\label{app:prediction}

In the introductory part of Section~\ref{sec:proofs} we introduced the Ockham prediction systems
(parameterised by well-orders on $\Omega$)
but limited ourselves to results required in the rest of that section.
In this appendix we extend these prediction systems and study them more systematically;
the new notions and results of this appendix are not needed in the main part of the paper.

Remember that $\preceq$ is a well-order on $\Omega$,
which in this appendix can be any of the $\Omega$ considered in the main part of the paper
(unless $\Omega$ is explicitly pointed out).
The \emph{Ockham prediction system} is:
\begin{itemize}
\item
  given $\omega|_{[0,t)}$, the prediction $\omega^{t-}$ for the rest of $\omega$
  is defined as the $\preceq$-smallest $\omega'$ such that $\omega'|_{[0,t)}=\omega|_{[0,t)}$;
\item
  given $\omega|_{[0,t]}$, the prediction for the rest of $\omega$ is $\omega^t$,
  as defined in Section~\ref{sec:proofs}:
  $\omega^t$ is the $\preceq$-smallest element $\omega'$ satisfying $\omega'|_{[0,t]}=\omega|_{[0,t]}$;
\item
  the revised prediction $\omega^{t+}$ at the time $t\in[0,1)$ is $\omega^{t'}$
  for any $t'\in(t,1]$ such that the function $s\mapsto\omega^s$ is constant over $(t,t']$.
\end{itemize}
The existence of $\omega^{t+}$ was shown in part~\ref{it:W} of Lemma~\ref{lem:prediction}
(and was already used in the proof of Theorem~\ref{thm:cadlag-plus} in Subsection~\ref{subsec:thm-cadlag-plus}).
By definition, $\omega^{1+}$ is undefined, but notice that $\omega^t$ and $\omega^{t-}$
are defined for all $t\in[0,1]$
(in particular, $\omega^{0-}$ is the $\preceq$-smallest element of $\Omega$).

We have the following three dichotomies for time points $t\in[0,1]$
for the purpose of short-term prediction of a given $\omega\in\Omega$:
\begin{itemize}
\item
  $t$ is \emph{past-successful} (for $\omega$)
  if there exists $t'<t$ such that $\omega^{t'}=\omega^{t-}$;
  in particular, $0$ is not past-successful;
\item
  $t$ is \emph{present-successful} if $\omega^{t-}=\omega^{t}$;
\item
  $t$ is \emph{future-successful} if $\omega^{t}=\omega^{t+}$;
  in particular, $1$ is not future-successful.
\end{itemize}
This gives us a partition of all $t\in[0,1]$ into $2^3=8$ classes.
We say that $t$ is \emph{$(-,0,+)$-successful}
(for the short-term prediction of $\omega$ using the Ockham prediction system)
if $t$ is simultaneously past-successful, present-successful, and future-successful;
this is the highest degree of success.
More generally, we will include $-$ (respectively, $0$, $+$) in the designation of the class of $t$
if and only if $t$ is past- (respectively, present-, future-) successful.
The $t$ that are $()$-successful are not successful at all:
they are not past-successful, not present-successful, and not future-successful.
We will use notation such as $C^{(-,0,+)}_{\omega}$ and $C^{()}_{\omega}$
for denoting the set of $t$ of the class indicated as the superscript.
For example, $C^{(-,0)}_{\omega}$ is the class of $t\in[0,1]$ that are past- and present-successful
but not future-successful for $\omega$.

The definition \eqref{eq:W} can be expressed in our new terminology
by saying that $W_{\omega}$ are the $t\in[0,1]$ that are not future-successful
(which agrees with $1\in W_{\omega}$);
in our new notation,
\begin{equation}\label{eq:W-as-union}
  W_{\omega}
  =
  C^{(-,0)}_{\omega} \cup C^{(-)}_{\omega} \cup C^{(0)}_{\omega} \cup C^{()}_{\omega}.
\end{equation}
Let us modify the definition \eqref{eq:W} by setting
\begin{equation*}
  F_{\omega}
  :=
  \left\{
    t\in[0,1]
    \st
    \left(
      \forall t'\in(t,1]: \omega^{t'}\ne\omega^t
    \right)
    \text{ or }
    \left(
      \forall t'\in[0,t): \omega^{t'}\ne\omega^t
    \right)
  \right\};
\end{equation*}
in particular, $\{0,1\}\subseteq F_{\omega}$.
This is the set of times $t\in[0,1]$ when the Ockham prediction system fails in the weakest possible sense
that can be expressed via our three dichotomies;
namely, $F_{\omega}$ is the set $[0,1]\setminus C^{(-,0,+)}$ of $t\in[0,1]$ that are not $(-,0,+)$-successful.
If $t\in[0,1]\setminus F_{\omega}$,
the Ockham prediction system correctly predicts $\omega|_{[t_1,t_2]}$ already at time $t_1$,
where $(t_1,t_2)\ni t$ is a neighbourhood of $t$.

It is always true that $W_{\omega}\subseteq F_{\omega}$, and even the set $F_{\omega}$ is still small:

\begin{lemma}\label{lem:F-small}
  The set $F_{\omega}$ is well-ordered by $\le$.
  (Therefore, each of its points is isolated on the right,
  which implies that $F_{\omega}$ is countable and nowhere dense.)
\end{lemma}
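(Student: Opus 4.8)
The plan is to mimic the proof of part~\ref{it:small} of Lemma~\ref{lem:prediction}, where the key device was that moving to the right strictly increases $\omega^t$ in the well-order $\preceq$, so an infinite strictly decreasing chain of times would produce an infinite strictly decreasing $\preceq$-chain, contradicting well-foundedness. The obstacle here is that $F_{\omega}$ is larger than $W_{\omega}$: a point $t\in F_{\omega}$ need only fail \emph{one} of the two directional conditions (past-unsuccessful \emph{or} future-unsuccessful), so we no longer have the clean statement ``$\omega^t$ strictly increases along $F_{\omega}$.'' To recover a monotonicity argument I would instead track the revised prediction $\omega^{t+}$ (defined for $t<1$ via part~\ref{it:W} of Lemma~\ref{lem:prediction}), which is more stable.

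The heart of the argument: suppose for contradiction that $t_1>t_2>\cdots$ is an infinite strictly decreasing chain in $F_{\omega}$, with limit $t_\infty:=\lim_i t_i\ge 0$. First I would show the sequence $\omega^{t_i+}$ is $\preceq$-nonincreasing, i.e.\ $\omega^{t_{i+1}+}\preceq\omega^{t_i+}$: indeed $\omega^{t_{i+1}+}=\omega^{s}$ for $s$ slightly above $t_{i+1}$, and such $s$ can be chosen in $(t_{i+1},t_i)$, whence $\omega^{t_{i+1}+}=\omega^s\preceq\omega^{t_i}\preceq\omega^{t_i+}$ using that $s<t_i$ forces $\omega^s\preceq\omega^{t_i}$ (smaller data, hence $\preceq$-smaller or equal Ockham choice) and that $\omega^{t_i}\preceq\omega^{t_i+}=\omega^{t'}$ for $t'>t_i$. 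Since $\preceq$ is a well-order, a nonincreasing chain must be eventually constant: $\omega^{t_i+}=:\omega^\ast$ for all large $i$. Fix such an $i$. Now $t_i\in F_{\omega}$, so $t_i$ is either past-unsuccessful or future-unsuccessful. If $t_i$ is future-unsuccessful, then $t_i\in W_\omega$, but $W_\omega$ is well-ordered by $\le$ (part~\ref{it:small} of Lemma~\ref{lem:prediction}) and $t_1>t_2>\cdots$ with $t_i\in W_\omega$ for infinitely many $i$ would already contradict that — so \emph{after discarding finitely many indices} we may assume every $t_i$ is future-successful, meaning $\omega^{t_i}=\omega^{t_i+}=\omega^\ast$. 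Then for large $i$ we have $\omega^{t_i}=\omega^\ast$ constant; but picking $t'\in(t_{i+1},t_i)$ gives $\omega^{t'}\preceq\omega^{t_i}=\omega^\ast$ and also $\omega^{t'}\succeq\omega^{t_{i+1}+}=\omega^\ast$ is false in general — instead observe $\omega^{t_{i+1}}$ agrees with $\omega$ on $[0,t_{i+1}]$, and since $\omega^{t_i}=\omega^\ast$ also agrees with $\omega$ on $[0,t_i]\supseteq[0,t_{i+1}]$, minimality gives $\omega^{t_{i+1}}\preceq\omega^\ast$; combined with $\omega^{t_{i+1}}\succeq$ anything $\preceq$-below it that is compatible with the shorter data, one gets $\omega^{t_{i+1}}=\omega^\ast$ as well. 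Thus $\omega^{t'}=\omega^\ast$ for all $t'$ in a left-neighbourhood of $t_i$, which makes $t_i$ past-successful (take $t'<t_i$ with $\omega^{t'}=\omega^\ast=\omega^{t_i-}$, the last equality because $\omega^{t_i-}$, the Ockham choice for data $\omega|_{[0,t_i)}$, must then also equal $\omega^\ast$), contradicting $t_i$ being past-unsuccessful — and past-unsuccessful is forced once we have discarded the future-unsuccessful indices. This contradiction establishes that $F_\omega$ is well-ordered by $\le$.

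The parenthetical consequences are then immediate and identical to the proof of Lemma~\ref{lem:prediction}(\ref{it:small}): in a set well-ordered by $\le$ every element other than the maximum has an immediate successor, hence is isolated on the right, so the set is nowhere dense; and mapping each non-maximal point to a rational in the open interval up to its successor is an injection into $\mathbb{Q}$, giving countability. I expect the main obstacle to be the bookkeeping in the middle step — pinning down exactly why eventual constancy of $\omega^{t_i+}$ forces simultaneous past- and future-success at $t_i$, so that \emph{no} $F_\omega$-membership condition can hold — and the cleanest route is probably to run the decreasing-chain argument separately against $W_\omega$ (for the future-unsuccessful part, already handled by Lemma~\ref{lem:prediction}) and against the ``time-reversed'' analogue $W_\omega':=\{t\st\forall t'\in[0,t):\omega^{t'}\ne\omega^t\}$ (for the past-unsuccessful part), noting $F_\omega=W_\omega\cup W_\omega'$ and that each of $W_\omega,W_\omega'$ is well-ordered by $\le$, and finally that a finite (indeed two-fold) union of sets well-ordered by $\le$ need not be well-ordered in general — but here one checks directly that an infinite descending chain in $W_\omega\cup W_\omega'$ has an infinite subchain in one of them, contradiction.
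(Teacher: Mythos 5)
Your proof is correct, but it is organized differently from the paper's. The paper argues directly on the chain $\omega^{t_1}\succeq\omega^{t_2}\succeq\cdots$ attached to a hypothetical strictly decreasing sequence of times in $F_{\omega}$: an equality $\omega^{t_i}=\omega^{t_{i+1}}$ forces, via the definition of $F_{\omega}$, the strict inequalities $\omega^{t_{i-1}}\succ\omega^{t_i}$ and $\omega^{t_{i+1}}\succ\omega^{t_{i+2}}$, so removing duplicates still leaves an infinite $\succ$-chain, contradicting well-foundedness. Your closing remark is the cleanest version of your own argument and constitutes a genuinely different (and shorter) organization: by the displayed definition of $F_{\omega}$ one has $F_{\omega}=W_{\omega}\cup W'_{\omega}$ with $W'_{\omega}:=\{t\st\forall t'\in[0,t):\omega^{t'}\ne\omega^{t}\}$; $W_{\omega}$ admits no infinite strictly decreasing sequence by part~\ref{it:small} of Lemma~\ref{lem:prediction}, $W'_{\omega}$ by the mirror-image one-line argument (if $t_2<t_1$ are both in $W'_{\omega}$ then $\omega^{t_2}\prec\omega^{t_1}$), and an infinite descending chain in the union has an infinite subchain in one of the pieces. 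Both routes rest on the same mechanism, the $\preceq$-monotonicity of $t\mapsto\omega^{t}$; yours trades the paper's duplicate-removal bookkeeping for the introduction of $W'_{\omega}$. Your longer middle argument via eventual constancy of $\omega^{t_i+}$ also reaches a valid contradiction --- once the remaining $t_i$ are all future-successful, $\omega^{t_{i+1}}=\omega^{t_i}=\omega^{\ast}$ with $t_{i+1}<t_i$ directly violates the second disjunct in the definition of $F_{\omega}$ applied to $t_i$, so the detour through $\omega^{t_i-}$ and left neighbourhoods is unnecessary. Two small slips are worth correcting, though neither creates a gap: the inequality $\omega^{t'}\succeq\omega^{t_{i+1}+}$ for $t'>t_{i+1}$, which you dismiss as ``false in general'', is actually true (take $s\in(t_{i+1},t']$ small enough that $\omega^{s}=\omega^{t_{i+1}+}$ and use monotonicity), so the garbled minimality step can simply be deleted; and a union of two subsets of $[0,1]$ each well-ordered by $\le$ is always well-ordered by $\le$ (the minimum of a nonempty subset is the smaller of the minima over the two pieces), so the hedge ``need not be well-ordered in general'' is wrong, though harmless since you then verify the needed fact directly by the subchain argument.
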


\begin{proof}
  We can modify the argument in the proof of Lemma~\ref{lem:prediction}, part~\ref{it:small}.
  Suppose there is an infinite strictly decreasing chain $t_1>t_2>\cdots$ of elements of $F_{\omega}$.
  Then $\omega^{t_1}\succeq\omega^{t_2}\succeq\cdots$,
  where $=$ is now possible.
  However, if $\omega^{t_i}=\omega^{t_{i+1}}$
  (i.e., $t_{i+1}$ is future-successful and $t_{i}$ is past- and present-successful for $\omega$)
  for $i$ in $\{2,3,\ldots\}$,
  then by the definition of $F_{\omega}$ we have $\omega^{t_{i-1}}\succ\omega^{t_{i}}$ and $\omega^{t_{i+1}}\succ\omega^{t_{i+2}}$.
  Therefore, if we remove the duplicates
  from the chain $\omega^{t_1}\succeq\omega^{t_2}\succeq\cdots$
  (i.e., replace each adjacent pair $\omega^{t_i}=\omega^{t_{i+1}}$ by just $\omega^{t_i}$),
  we will still have an infinite chain with $\succ$ in place of $\succeq$.
  This contradicts $\preceq$ being a well-order.
\end{proof}

Part~\ref{it:small} of Lemma~\ref{lem:prediction} is a special case of Lemma~\ref{lem:F-small},
since any subset of a well-ordered set is well-ordered.
We can also see that each of the eight classes apart from $C^{(-,0,+)}_{\omega}$ is well-ordered.  

The following lemma shows that $F_{\omega}$ splits $[0,1]$ into intervals of constancy of $t\mapsto\omega^t$.

\begin{lemma}\label{lem:constant}
  For any $t\in F_{\omega}\setminus\{1\}$,
  the function $s\mapsto\omega^s$ is constant on the interval $(t,t')$,
  where $t'$ is the successor of $t$ in $F_{\omega}$.
\end{lemma}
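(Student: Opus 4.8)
The plan is to show that the prediction map $\varphi\colon s\mapsto\omega^s$ takes the constant value $\omega^{t+}$ on the whole interval $(t,t')$. Three easy preliminaries come first. Since $t\in F_\omega\setminus\{1\}\subseteq[0,1)$, the revised prediction $\omega^{t+}$ is defined (part~\ref{it:W} of Lemma~\ref{lem:prediction}). Since $F_\omega$ is well-ordered by $\le$ (Lemma~\ref{lem:F-small}) and $1=\max F_\omega$, the successor $t'>t$ exists. And by the definition of the successor, $(t,t')\cap F_\omega=\emptyset$, so every $s\in(t,t')$ is $(-,0,+)$-successful; in particular every such $s$ is present-successful and future-successful.

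Next I would put $t_1:=\sup\{s\in(t,1]\st\varphi(r)=\omega^{t+}\text{ for all }r\in(t,s)\}$. This set is non-empty by the definition of $\omega^{t+}$, so $t_1>t$ and $\varphi\equiv\omega^{t+}$ on $(t,t_1)$; the whole task reduces to proving $t_1\ge t'$. Assume, for contradiction, that $t_1<t'$, so that $t_1\in(t,t')$ is $(-,0,+)$-successful and $t_1\ne1$. The decisive computation is to identify the left prediction $\omega^{t_1-}$. From $\varphi\equiv\omega^{t+}$ on $(t,t_1)$ we get $\omega^{t+}|_{[0,r]}=\omega^r|_{[0,r]}=\omega|_{[0,r]}$ for all $r\in(t,t_1)$, hence $\omega^{t+}|_{[0,t_1)}=\omega|_{[0,t_1)}$, so $\omega^{t+}$ is one of the functions competing for the $\preceq$-minimum defining $\omega^{t_1-}$, giving $\omega^{t_1-}\preceq\omega^{t+}$; conversely $\omega^{t_1-}$ agrees with $\omega$ on $[0,t_1)$, hence on every $[0,r]$ with $r\in(t,t_1)$, so it competes for $\omega^r=\omega^{t+}$, giving $\omega^{t+}\preceq\omega^{t_1-}$. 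Thus $\omega^{t_1-}=\omega^{t+}$. Present-successfulness of $t_1$ now yields $\omega^{t_1}=\omega^{t_1-}=\omega^{t+}$, and future-successfulness of $t_1$ (legitimate since $t_1\ne1$) yields $\omega^{t_1+}=\omega^{t_1}=\omega^{t+}$, so $\varphi\equiv\omega^{t+}$ on some $(t_1,t_2)$ with $t_2>t_1$. Combining this with $\varphi\equiv\omega^{t+}$ on $(t,t_1)$ and at $t_1$ shows $\varphi\equiv\omega^{t+}$ on $(t,t_2)$, which contradicts $t_1$ being the supremum above. Hence $t_1\ge t'$ and $\varphi$ is constant on $(t,t')$.

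The main obstacle is exactly the identification $\omega^{t_1-}=\omega^{t+}$: one has to notice that although $\omega^{t_1-}$ need not be the $\preceq$-supremum of $\{\omega^r\st r<t_1\}$ in general, agreement with $\omega$ on the half-open interval $[0,t_1)$ is precisely the condition that makes $\omega^{t+}$ win that competition and, symmetrically, forces $\omega^{t_1-}$ to dominate every $\omega^r$ with $r<t_1$. Everything else is routine: the monotonicity $s<s'\Rightarrow\omega^s\preceq\omega^{s'}$, the existence of $\omega^{t+}$ and of $t'$, and the bookkeeping with suprema and successors. One should, however, keep track of the boundary cases, checking that $t\ne1$ and $t_1\ne1$ (so that $\omega^{t+}$ and $\omega^{t_1+}$ make sense) and that $(t,t')$ is a genuinely non-empty real interval.
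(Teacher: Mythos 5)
Your proof is correct and is essentially the paper's argument: both are connectedness-style arguments exploiting that every point of $(t,t')$ lies outside $F_\omega$, hence is $(-,0,+)$-successful, so that the local constancy of $s\mapsto\omega^s$ propagates across the whole gap between consecutive points of $F_\omega$. The only differences are in execution --- the paper anchors the constant value at a generic interior point $t''$ and runs a two-sided inf/sup argument, whereas you anchor at $\omega^{t+}$ and sweep rightward from the left endpoint, and your explicit identification $\omega^{t_1-}=\omega^{t+}$ (which uses only present- and future-successfulness of $t_1$) carefully spells out what the paper compresses into ``there is a neighbourhood of $t_1$ in which $s\mapsto\omega^s$ is constant.''
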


\begin{proof}
  Take any $t\in F_{\omega}\setminus\{1\}$, its successor $t'\in F_{\omega}$, and any $t''\in(t,t')$.
  Set
  \begin{align}
    t_1
    &:=
    \inf
    \left\{
      s\in(t,t')\st\omega^s=\omega^{t''}
    \right\},
    \label{eq:inf}\\
    t_2
    &:=
    \sup
    \left\{
      s\in(t,t')\st\omega^s=\omega^{t''}
    \right\}.
    \notag
  \end{align}
  It suffices to show that $t_1=t$ and $t_2=t'$.
  Suppose, e.g., that $t_1>t$.
  This implies $t_1\notin F_{\omega}$.
  By the definition of $F_{\omega}$, there is a neighbourhood of $t_1$ in which $s\mapsto\omega^s$ is constant
  and, therefore, $\omega^s=\omega^{t''}$;
  this, however, contradicts the definition \eqref{eq:inf} of $t_1$.
\end{proof}

Let us check that the analogue of Lemma~\ref{lem:constant} still holds for $W_{\omega}$ in place of $F_{\omega}$
if $\Omega=C[0,1]$ but fails in general.

\begin{lemma}\label{lem:counter-example}
  If $\Omega=C[0,1]$,
  for any $t\in W_{\omega}\setminus\{1\}$,
  the function $s\mapsto\omega^s$ is constant on the interval $(t,t')$,
  where $t'$ is the successor of $t$ in $W_{\omega}$.
  Otherwise,
  there are a well-order on $\Omega$, $\omega\in\Omega$, and $t\in W_{\omega}\cap(0,1)$
  such that the successor $t'$ of $t$ in $W_{\omega}$ is in $(0,1)$
  and the function $s\mapsto\omega^s$ is not constant over the interval $(t,t')$,
\end{lemma}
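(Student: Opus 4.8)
The plan is to split into the two asserted parts: first the positive result for $\Omega = C[0,1]$, then the counterexample for general $\Omega$. For the positive part, fix $t \in W_\omega \setminus \{1\}$ with successor $t'$ in $W_\omega$. By definition of $W_\omega$ and the well-ordering of $W_\omega$ by $\le$ (Lemma~\ref{lem:prediction}\ref{it:small}), the interval $(t,t')$ is disjoint from $W_\omega$, so every $s \in (t,t')$ is future-successful: $\omega^s = \omega^{s+}$. I would first show that $s \mapsto \omega^s$ is locally constant on $(t,t')$ (immediate from the definition of $\omega^{s+}$), hence constant on each connected component; since $(t,t')$ is connected, it is constant there, say equal to some fixed $\omega^\ast \in \Omega$. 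So far this works for any $\Omega$; the point where continuity of $\Omega = C[0,1]$ must enter is in ruling out a ``jump'' of the prediction exactly at the right endpoint of a maximal interval of constancy inside $(t,t')$ that could in principle sit strictly between $t$ and $t'$. Concretely, the subtlety (flagged in the remark after Lemma~\ref{lem:prediction}) is that a priori the constancy interval attached to $t$ need only be $(t, t'')$ for some $t'' \le t'$, and one must show $t'' = t'$. I would argue as in the proof of Lemma~\ref{lem:constant}: if $t'' < t'$ then $t''$ would have to lie in $W_\omega$ (otherwise the prediction is locally constant across $t''$, contradicting maximality), but $t''\in(t,t')$ contradicts $t'$ being the successor of $t$ in $W_\omega$. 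The role of continuity is to guarantee that the ``candidate successor'' picture for $W_\omega$ matches that for $F_\omega$; more precisely, when $\Omega = C[0,1]$ one checks $W_\omega = F_\omega$ (equivalently, past- and present-successfulness are automatic once future-successfulness fails, because the compatible data $\omega|_{[0,t)}$ and $\omega|_{[0,t]}$ coincide for continuous $\omega$ since $\omega(t) = \lim_{s\uparrow t}\omega(s)$), and then the result is exactly Lemma~\ref{lem:constant}.

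For the counterexample, I would construct a $\preceq$, an $\omega$, and a point $t \in W_\omega \cap (0,1)$ as follows. Work in $\Omega$ consisting of, say, càdlàg or merely arbitrary functions $[0,1] \to \mathbb{R}$ (as in Sections~\ref{sec:cadlag}--\ref{sec:positive}), where $\omega(t)$ and $\omega(t-)$ may differ. The idea is to engineer two functions $\omega_1, \omega_2$ agreeing on $[0,t]$ but with $\omega_1 \prec \omega$ on $[0,t']$ being ``revealed'' only past $t$, while between $t$ and $t'$ the prediction switches from $\omega_1$ to $\omega_2$. Pick $t = 1/3$, $t' = 2/3$. Let $\omega$ be a function that is, say, $0$ on $[0,1/3)$, has a jump to $1$ at $1/3$, then another jump at $1/2$ to $2$, and is arbitrary afterwards. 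Choose the well-order $\preceq$ so that: (i) the $\preceq$-least element compatible with $\omega|_{[0,s]}$ for $s \in (1/3, 1/2)$ is some $\omega_1$ which happens not to be compatible with $\omega|_{[0,s]}$ for $s \ge 1/2$ (so the prediction must change at $1/2$); (ii) for $s \in [1/2, 2/3)$ the $\preceq$-least compatible element is a different $\omega_2$; (iii) $\omega_1 \ne \omega^{1/3}$ so that $1/3 \in W_\omega$ but is the predecessor of $2/3$ in $W_\omega$. Then $s \mapsto \omega^s$ takes value $\omega_1$ on $(1/3, 1/2)$ and $\omega_2$ on $[1/2, 2/3)$, hence is not constant on $(t,t') = (1/3, 2/3)$, yet $(1/3, 1/2) \cap W_\omega = \emptyset$ and $(1/2, 2/3) \cap W_\omega = \emptyset$ and $1/2 \notin W_\omega$ (it is future-successful because $\omega^s = \omega_2$ for $s$ slightly above $1/2$). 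So no point of $(1/3,2/3)$ lies in $W_\omega$, confirming $2/3$ is the successor of $1/3$.

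The main obstacle will be the bookkeeping in the counterexample: one must actually exhibit a well-order $\preceq$ on an uncountable $\Omega$ realizing the prescribed ``least compatible element'' behaviour at the three stages, and simultaneously verify that no hidden point of $W_\omega$ sneaks into $(1/3,2/3)$ and that $1/3$ itself genuinely belongs to $W_\omega$ with $2/3$ as successor. The clean way is to build $\preceq$ by ordinal recursion: enumerate $\Omega$ as $(\omega_\xi)_{\xi < \kappa}$ with $\omega_0 = \omega_1$ (the function one wants predicted on $(1/3,1/2)$) placed first, $\omega_1 = \omega_2$ placed second, then $\omega^{1/3}$ and $\omega$ itself placed later, and the remaining elements placed afterwards in any order; one then checks directly from this enumeration that $\omega^s = \omega_1$ for $s\in(1/3,1/2)$ (because $\omega_1$ is $\preceq$-least overall and is compatible with $\omega|_{[0,s]}$ precisely for $s < 1/2$), that $\omega^s = \omega_2$ for $s\in[1/2,2/3)$ (because $\omega_2$ is $\preceq$-second and compatible there), and that $\omega^{1/3}\ne\omega_1$ forces $1/3\in W_\omega$. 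For the remaining part concerning $C[0,1]$ I expect no real difficulty — it is Lemma~\ref{lem:constant} together with the identity $W_\omega = F_\omega$ for continuous paths — so the essay should spend most of its words on setting up the recursion and checking the three stage computations.
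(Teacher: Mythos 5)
Your argument for the continuous case has a genuine gap. The step ``$s\mapsto\omega^s$ is locally constant on $(t,t')$, hence constant; so far this works for any $\Omega$'' is false: future-successfulness at $s$ only gives constancy on a \emph{right} neighbourhood $[s,s+\epsilon)$, and right-sided local constancy at every point of an interval does not imply constancy on it --- indeed the second half of the very lemma you are proving exhibits a prediction that switches at an interior point ($s=1/2$ in the paper's example) which is future-successful and hence outside $W_{\omega}$. Your fallback (``if $t''<t'$ then $t''$ must lie in $W_{\omega}$, otherwise the prediction is locally constant across $t''$'') suffers from the same defect: $t''\notin W_{\omega}$ gives constancy only to the right of $t''$, and to exclude a switch \emph{at} $t''$ one must prove $\omega^{t''}=\omega^{t+}$, i.e.\ that $\omega^{t+}$ is still compatible with $\omega|_{[0,t'']}$; this is precisely where continuity enters (agreement on $(t,t'')$ extends to the closed endpoint), and it is the step you never carry out. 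The substitute you offer instead --- ``$W_{\omega}=F_{\omega}$ when $\Omega=C[0,1]$'' --- is false: by Lemma~\ref{lem:structure} the class $C^{(0,+)}_{\omega}$ can equal $\{c\}$ for continuous $\omega$ (a point that is present- and future-successful but not past-successful), and any such point lies in $F_{\omega}\setminus W_{\omega}$; continuity makes every $t>0$ present-successful, but not past-successful. The paper's proof avoids all this by working with $F_{\omega}$: take $t^*$ the successor of $t$ in $F_{\omega}$ (so $t^*\le t'$ since $W_{\omega}\subseteq F_{\omega}$), get constancy on $(t,t^*)$ from Lemma~\ref{lem:constant}, use continuity to conclude that $\omega^{t+}$ agrees with $\omega$ on all of $[0,t^*]$, deduce $\omega^{t^*}=\omega^{t+}$, so that $t^*$ is past- and present-successful and therefore, being in $F_{\omega}$, not future-successful; hence $t^*\in W_{\omega}$ and $t^*=t'$.

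Your counterexample follows the same idea as the paper's (a rightward prediction switch at an interior future-successful point between consecutive elements of $W_{\omega}$), but as described it is internally inconsistent: you place $\omega_1$ first in the well-order and require it to be compatible with $\omega|_{[0,s]}$ for $s\in(1/3,1/2)$, hence with $\omega|_{[0,1/3]}$; then $\omega^{1/3}=\omega_1=\omega^{(1/3)+}$, so $1/3$ is future-successful and $1/3\notin W_{\omega}$, contradicting your requirement (iii) and destroying the example. You need one more element, placed before $\omega_1$, that agrees with $\omega$ exactly up to time $1/3$ and deviates immediately afterwards (this is why the paper's construction uses four functions, the constant $1$ playing that role at $t=1/4$). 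Also, your $\omega$ vanishes on $[0,1/3)$, so it does not belong to the non-continuous $\Omega$'s of Sections~\ref{sec:cadlag} and~\ref{sec:positive}, which require $\inf\omega>0$; the values must be shifted up. These defects are repairable, but as written both halves of the proposal have real gaps.
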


\begin{proof}
  Suppose $\Omega=C[0,1]$, $t\in W_{\omega}\setminus\{1\}$,
  and $t'$ is the successor of $t$ in $W_{\omega}$.
  Let $t^*\le t'$ be the successor of $t$ in $F_{\omega}$.
  By Lemma~\ref{lem:constant},
  $\omega^{t+}(s)=\omega(s)$ for each $s\in(t,t^*)$,
  and so our continuity assumption implies that $\omega^{t+}(s)=\omega(s)$ for each $s\in(t,t^*]$.
  This shows that $t'=t^*$,
  and so $s\mapsto\omega^s$ is constant on $(t,t')=(t,t^*)$.

  Now suppose $\Omega\ne C[0,1]$.
  In the remaining proofs we will freely use the fact that the sum of two well-orders is again a well-order
  \cite[Lemma 3.5(2)]{Rosenstein:1982},
  where the \emph{sum} $\preceq$ of orders $\preceq_1$ and $\preceq_2$ on two disjoint sets $X_1$ and $X_2$,
  respectively,
  is defined by
  \begin{equation*}
    x \preceq x'
    \Longleftrightarrow
    \begin{cases}
      x \preceq_1 x' & \text{if $x\in X_1$ and $x'\in X_1$}\\
      x \preceq_2 x' & \text{if $x\in X_2$ and $x'\in X_2$}\\
      \textbf{true} & \text{if $x\in X_1$ and $x'\in X_2$}\\
      \textbf{false} & \text{if $x\in X_2$ and $x'\in X_1$}\\
    \end{cases}
  \end{equation*}
  for any $x,x'\in X_1\cup X_2$ \cite[Definition 1.29]{Rosenstein:1982}.
  This implies that any well-order on part of $\Omega$ can be extended to a well-order on the whole of $\Omega$.

  Consider a well-order on $\Omega$ that starts from the c\`adl\`ag functions defined by
  \begin{align*}
    \omega_1
    &:=
    1,\\
    \omega_2(t)
    &:=
    \begin{cases}
      1 & \text{if $t\in[0,1/4]$}\\
      t+3/4 & \text{otherwise},
    \end{cases}\\
    \omega_3(t)
    &:=
    \begin{cases}
      1 & \text{if $t\in[0,1/4]$}\\
      t+3/4 & \text{if $t\in[1/4,1/2)$}\\
      1 & \text{otherwise},
    \end{cases}\\
    \omega_4(t)
    &:=
    \begin{cases}
      1 & \text{if $t\in[0,1/4]$}\\
      t+3/4 & \text{if $t\in[1/4,1/2)$}\\
      1 & \text{if $t\in[1/2,3/4]$}\\
      t+1/4 & \text{otherwise}
    \end{cases}
  \end{align*}
  (in this order).
  For $\omega:=\omega_4$, we have $W_{\omega}=\{1/4,3/4\}$,
  $3/4$ is the successor of $1/4$ in $W_{\omega}$,
  and the function $s\mapsto\omega^s$ is not constant over the interval $(1/4,3/4)$
  (it changes its value at $s=1/2$ from $\omega_2$ to $\omega_3$).
\end{proof}

The following result (easy but tiresome) shows that each of the eight classes may be non-empty,
apart from the case $\Omega=C[0,1]$, where there are six potentially non-empty classes
(two of them being subsets of $\{0\}$).
In particular, it implies that $W_{\omega}\ne F_{\omega}$ is possible
and, moreover, $F_{\omega}\setminus W_{\omega}$ may contain any $c\in(0,1)$
(which is also clear from the proof of Lemma~\ref{lem:counter-example}).

\begin{lemma}\label{lem:structure}
  The class $C^{(-,0,+)}_{\omega}$ is never empty
  (and is large in the sense of Lemma~\ref{lem:F-small}).
  Let $c\in(0,1)$.
  \begin{itemize}
  \item
    For $\Omega=C[0,1]$ of Section~\ref{sec:continuous},
    the classes $C^{(-,+)}_{\omega}$ and $C^{(-)}_{\omega}$ are always empty;
    the classes $C^{(+)}_{\omega}$ and $C^{()}_{\omega}$ are subsets of $\{0\}$
    (and can be equal both to $\{0\}$ and to $\emptyset$, depending on $\omega$ and the well-order $\preceq$);
    for each of the remaining three classes 
    (those containing $0$ in the superscript but different from $C^{(-,0,+)}_{\omega}$)
    there exist $\omega\in\Omega$ and $\preceq$ such that the class coincides with~$\{c\}$.
  \item
    For all other $\Omega$ considered in the main part of the paper,
    for each of the eight classes apart from $C^{(-,0,+)}_{\omega}$
    there exist $\omega\in\Omega$ and $\preceq$ such that the class coincides with~$\{c\}$.
  \end{itemize}
\end{lemma}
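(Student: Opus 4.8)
The plan is to separate the three easy assertions from the (numerous) explicit constructions.

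First, $C^{(-,0,+)}_{\omega}=[0,1]\setminus F_{\omega}$ directly from the definition of $F_{\omega}$, so Lemma~\ref{lem:F-small} immediately gives that this class is non-empty (indeed co-countable) and large in the stated sense. For $\Omega=C[0,1]$ I would then record the rigidity forced by continuity: any competitor agreeing with $\omega$ on $[0,t)$ also agrees on $[0,t]$, so $\omega^{t-}=\omega^{t}$ and every $t\in(0,1]$ is present-successful; hence a non-present-successful point must be $0$, which is never past-successful, and this yields $C^{(-)}_{\omega}=C^{(-,+)}_{\omega}=\emptyset$ and $C^{(+)}_{\omega},C^{()}_{\omega}\subseteq\{0\}$. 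That the last two can be $\{0\}$ or $\emptyset$ follows from minimal examples: with $\omega$ the $\preceq$-least element of $\Omega$ the point $0$ is present-successful and both classes are empty, while with $\omega(0)$ different from the value at $0$ of the $\preceq$-least element the point $0$ is non-present-successful and falls into $C^{(+)}_{\omega}$ or $C^{()}_{\omega}$ depending on whether the Ockham prediction stabilises immediately after time $0$ (i.e.\ whether $\omega^{0}=\omega^{0+}$) or not.

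For the remaining classes the method, modelled on the proof of Lemma~\ref{lem:counter-example}, is to specify an initial segment $\omega_{1}\prec\omega_{2}\prec\cdots$ of a well-order of $\Omega$ (it extends to a well-order of all of $\Omega$ since a sum of well-orders is a well-order) containing the target path $\omega$, with each $\omega_{j}$ agreeing with $\omega$ on an interval $[0,a_{j}]$ and then diverging. Then $\omega^{s}$ is the $\preceq$-least $\omega_{j}$ compatible with $\omega|_{[0,s]}$, and $\omega^{s-}$ the $\preceq$-least one compatible with $\omega|_{[0,s)}$, or $\omega$ itself if none is; by Lemmas~\ref{lem:constant} and~\ref{lem:F-small} the prediction profile is then a step function whose breakpoint set $F_{\omega}$ is an explicit well-ordered subset of $[0,1]$ assembled from $0$, the $a_{j}$, the point $c$, and $1$. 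It then remains to read off the success-profile at each point of $F_{\omega}$ and to tune the $\omega_{j}$ and the $a_{j}$ so that $c$ realises the prescribed profile and no other point of $F_{\omega}$ does; since every point outside $F_{\omega}$ is automatically $(-,0,+)$-successful, the class is then $\{c\}$. The moves available are: letting $\omega$ jump at $c$, so that the $\preceq$-least competitor compatible with $\omega|_{[0,c)}$ need not be compatible with $\omega|_{[0,c]}$ and the prediction switches \emph{at} $c$ (breaking present-success); inserting a competitor compatible with $\omega|_{[0,c+\epsilon]}$ for a small $\epsilon$, so the prediction switches again just \emph{after} $c$ (breaking future-success); reusing an already-diverged $\omega_{j}$ as the prediction on an interval just below $c$ (creating past-success); and, when $c$ is to be present- but not past-successful, arranging that $c$ has no immediate predecessor in $F_{\omega}$, so that $F_{\omega}$ accumulates at $c$ from below and $\omega^{c}$ is a genuinely new value approached by a chain $\alpha_{1}\prec\alpha_{2}\prec\cdots$ of ever-better approximations to $\omega|_{[0,c]}$ diverging just past times $a_{n}\uparrow c$. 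Combining these realises each of $C^{()}_{\omega}$, $C^{(-)}_{\omega}$, $C^{(0)}_{\omega}$, $C^{(+)}_{\omega}$, $C^{(-,0)}_{\omega}$, $C^{(-,+)}_{\omega}$, $C^{(0,+)}_{\omega}$ as $\{c\}$ for the non-continuous spaces (the case $C^{(-,+)}_{\omega}$ being essentially the example of Lemma~\ref{lem:counter-example}, where $1/2$ is past- and future- but not present-successful); for $\Omega=C[0,1]$ the ``jump at $c$'' move is unavailable, present-success cannot fail on $(0,1]$, and only $C^{(0)}_{\omega}$, $C^{(-,0)}_{\omega}$, $C^{(0,+)}_{\omega}$ survive among the $\{c\}$-realisable classes, built with the accumulation-from-below structure just described.

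I expect the main obstacle to be bookkeeping rather than a single idea: for each of the (up to seven) target classes one must exhibit the initial segment and then check the success-profile at \emph{every} point of $F_{\omega}$, including its limit points when $F_{\omega}$ is infinite, to be sure that no unintended point falls into the target class. Lemmas~\ref{lem:constant} and~\ref{lem:F-small} do most of the work by confining this check to the countable set $F_{\omega}$ (outside of which everything is automatically $(-,0,+)$-successful), and in the continuous case the rigidity noted above both eliminates two classes outright and dictates the shape of the surviving constructions.
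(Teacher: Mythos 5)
Your proposal follows essentially the same route as the paper: the structural reductions (co-countability of $C^{(-,0,+)}_{\omega}=[0,1]\setminus F_{\omega}$ via Lemma~\ref{lem:F-small}, present-success of every $t\in(0,1]$ under continuity, impossibility of past-success at $0$) are exactly the paper's, and the realizations are likewise built from short explicit initial segments of a well-order extended by the sum-of-well-orders fact, with your ``accumulation from below'' device coinciding with the paper's order-type-$(\boldsymbol{\omega}+1)$ segment $\omega_2\prec\omega_3\prec\cdots\prec\omega_{\boldsymbol{\omega}}\prec\omega_{\boldsymbol{\omega}+1}$ used for the present- but not past-successful classes. All you have deferred is writing out the concrete segments and checking the profile at each point of the resulting $F_{\omega}$ --- genuinely just bookkeeping here --- and your observation that the $\omega_4$ of Lemma~\ref{lem:counter-example} already witnesses $C^{(-,+)}_{\omega}=\{1/2\}$ does check out (the paper uses a shorter three-element segment for that class).
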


\begin{proof}
  First we consider the case $\Omega=C[0,1]$.
  The statement about the classes not containing $0$ in the superscript follows from any $t\in(0,1]$
  being present-successful for a continuous $\omega$.
  The classes $C^{(-,+)}_{\omega}$ and $C^{(-)}_{\omega}$ are empty
  since $0$ is never past-successful.
  If the three $\preceq$-smallest elements are $\omega_1:=1$, $\omega_2:=2$,
  and $\omega_3(t):=2+t$ (in this order),
  we have $C^{(+)}_{\omega_2}=\{0\}$, $C^{()}_{\omega_2}=\emptyset$,
  $C^{(+)}_{\omega_3}=\emptyset$, and $C^{()}_{\omega_3}=\{0\}$.

  It remains to consider, for this $\Omega$, the classes containing $0$ in the superscript.
  Let the well-order on $\Omega$ start from
  \begin{align*}
    \omega_1 &:= 1,\\
    \omega_2(t)
    &:=
    \begin{cases}
      1 & \text{if $t<c$}\\
      1+t-c & \text{otherwise},
    \end{cases}
  \end{align*}
  i.e., $\omega_1\prec\omega_2\prec\cdots$.
  We have $C^{(-,0)}_{\omega_2}=\{c\}$.
    
  Next consider a well-order on $\Omega$ that starts from
  \begin{align}
    \omega_1 &:= 2,
    \notag\\
    \omega_n(t)
    &:=
    \begin{cases}
      1+t & \text{if $t<c(1-1/n)$}\\
      1+c(1-1/n) & \text{otherwise},
    \end{cases}
    \quad
    n=2,3,\ldots,
    \label{eq:well-order-2}\\
    \omega_{\boldsymbol{\omega}}(t)
    &:=
    \begin{cases}
      1+t & \text{if $t<c$}\\
      1+c & \text{otherwise},
    \end{cases}
    \label{eq:well-order-3}\\
    \omega_{\boldsymbol{\omega}+1}(t)
    &:=
    1+t
    \notag
  \end{align}
  (in this order, with the boldface $\boldsymbol{\omega}$ standing for the first infinite ordinal).
  We have $C^{(0,+)}_{\omega_{\boldsymbol{\omega}}}=\{c\}$
  and $C^{(0)}_{\omega_{\boldsymbol{\omega}+1}}=\{c\}$.
  This completes the proof for $\Omega=C[0,1]$.

  Now let $\Omega\ne C[0,1]$ be any of the other $\Omega$ considered in the main part of the paper.
  Since $C[0,1]\subseteq\Omega$,
  we have already shown that $C^{(-,0)}_{\omega}$, $C^{(0,+)}_{\omega}$, and $C^{(0)}_{\omega}$
  can be equal to $\{c\}$.

  Consider a well-order on $\Omega$ that starts from
  \begin{align*}
    \omega_1 &:= 1,\\
    \omega_2(t)
    &:=
    \begin{cases}
      1 & \text{if $t<c$}\\
      2 & \text{otherwise},
    \end{cases}\\
    \omega_3(t)
    &:=
    \begin{cases}
      1 & \text{if $t<c$}\\
      2+t-c & \text{otherwise}
    \end{cases}
  \end{align*}
  (in this order).
  We have $C^{(-,+)}_{\omega_2}=\{c\}$ and $C^{(-)}_{\omega_3}=\{c\}$.

  Finally consider a well-order on $\Omega$ that starts from $\omega_1:=1$,
  \eqref{eq:well-order-2}--\eqref{eq:well-order-3},
  and
  \begin{align*}
    \omega_{\boldsymbol{\omega}+1}(t)
    &:=
    \begin{cases}
      1+t & \text{if $t<c$}\\
      2 & \text{otherwise},
    \end{cases}\\
    \omega_{\boldsymbol{\omega}+2}(t)
    &:=
    \begin{cases}
      1+t & \text{if $t<c$}\\
      2+t-c & \text{otherwise}
    \end{cases}
  \end{align*}
  (in this order).
  We have $C^{(+)}_{\omega_{\boldsymbol{\omega}+1}}=\{c\}$
  and $C^{()}_{\omega_{\boldsymbol{\omega}+2}}=\{c\}$.
\end{proof}

Lemma~\ref{lem:structure} shows that the number of different unions
(such as \eqref{eq:W-as-union})
that can be formed from the classes $C_{\omega}^{(\cdots)}$ is very large
(namely, $2^8=256$),
and many of these are potentially interesting.
For each of the unions we can ask what sets in $[0,1]$ can be represented as such a union for different $\omega$.
We will answer this question only for the union \eqref{eq:W-as-union},
which plays the most important role in this paper.

\begin{lemma}\label{lem:W}
  For any well-ordered set $W\subseteq[0,1]$ containing $1$,
  there exist $\omega\in\Omega$ and a well-oder $\preceq$
  such that $W_{\omega}=W$.
\end{lemma}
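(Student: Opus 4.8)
The plan is to take $\omega$ to be the constant function equal to $1$ — which belongs to every version of $\Omega$ considered in Sections~\ref{sec:continuous}--\ref{sec:positive} — and to build the well-order $\preceq$ so that the Ockham prediction $\omega^t$ is falsified in every right neighbourhood of $t$ exactly when $t\in W$. For each $t\in W\setminus\{1\}$ I would introduce a ``decoy'' path $g_t\in\Omega$ defined by $g_t(s):=1$ for $s\in[0,t]$ and $g_t(s):=1+(s-t)$ for $s\in(t,1]$; this $g_t$ is continuous, strictly positive, and bounded away from $0$, so it lies in $\Omega$ in all the cases considered, the map $t\mapsto g_t$ is injective, and no $g_t$ equals $\omega$.

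Next I would define $\preceq$ as the sum (a sum of well-orders is a well-order, as already used in the proof of Lemma~\ref{lem:counter-example}) of the following three well-orders, in this order: first $\{g_t:t\in W\setminus\{1\}\}$ ordered by its parameter, i.e.\ $g_t\prec g_{t'}$ iff $t<t'$, which is a well-order because $W\setminus\{1\}$ is well-ordered and $t\mapsto g_t$ preserves order; then the singleton $\{\omega\}$; then the remaining elements of $\Omega$ in an arbitrary well-order, which exists by the Axiom of Choice. These three pieces are pairwise disjoint and exhaust $\Omega$.

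The verification rests on the elementary observation that $g_s|_{[0,t]}=\omega|_{[0,t]}$ if and only if $s\ge t$ (since $g_s$ differs from $\omega$ precisely on $(s,1]$). Granting this, I would check three things. (i)~For $t\in W\setminus\{1\}$, the $\preceq$-smallest path agreeing with $\omega$ on $[0,t]$ is the $\preceq$-smallest such decoy, namely $g_t$ (the decoys form an initial segment of $\preceq$ ordered by parameter, and $t$ is the least element of $W\setminus\{1\}$ that is $\ge t$); since $g_t$ differs from $\omega$ on every interval $(t,t']$ with $t'>t$, we get $\omega^t|_{[0,t']}\ne\omega|_{[0,t']}$ for all $t'\in(t,1]$, i.e.\ $t\in W_\omega$ by~\eqref{eq:W}. (ii)~$1\in W_\omega$ always, and $1\in W$ by hypothesis. (iii)~For $t\in[0,1)\setminus W$: if there is $s\in W\setminus\{1\}$ with $s>t$, then $\omega^t=g_{s_0}$ for $s_0:=\min\{s\in W\setminus\{1\}:s>t\}$, and since $g_{s_0}$ agrees with $\omega$ on $[0,s_0]$ the point $t':=s_0>t$ witnesses $t\notin W_\omega$; otherwise no decoy agrees with $\omega$ on $[0,t]$, so $\omega^t=\omega$ (because $\omega$ is $\preceq$-next after all the decoys) and trivially $t\notin W_\omega$. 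Together, (i)--(iii) give $W_\omega=W$.

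I expect the main obstacle to be conceptual rather than computational: one must resist trying to encode $W$ into $\omega$ itself and instead keep $\omega$ trivial, doing all the work through the placement of the decoys $g_t$ at the bottom of $\preceq$. Once this is set up, the only real content is the ``iff $s\ge t$'' observation, checking that the dichotomy in~(iii) is exhaustive — which it is, because $1\in W$ forces $\{s\in W:s>t\}\ne\emptyset$ — and verifying that each branch of it supplies the required $t'$.
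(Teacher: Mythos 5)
Your proposal is correct and takes essentially the same route as the paper: both place an initial segment of ``decoy'' paths, order-isomorphic to $W$, at the bottom of $\preceq$, each decoy agreeing with $\omega$ exactly up to its parameter and deviating immediately afterwards (the paper uses $\omega(t)=t$ with decoys that freeze at $w\in W$, you use $\omega\equiv1$ with decoys that start growing), so that $\omega^t$ is the decoy indexed by the least element of $W$ that is $\ge t$. The only incidental difference is that your paths lie in every $\Omega$ considered at once, whereas the paper carries out the construction in $C[0,1]$ and notes that the other cases follow.
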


\begin{proof}
  It suffices to consider $\Omega=C[0,1]$,
  which will imply the analogous statement for any other $\Omega$ considered in this paper.
  Let $\alpha$ be the ordinal that is isomorphic to $W$ \cite[Theorem~2.12]{Jech:2003},
  and let $\beta\in\alpha\mapsto w_{\beta}\in W$ be the unique isomorphism \cite[Corollary~2.6]{Jech:2003}
  between $\alpha$ and $W$.
  Let $\preceq$ be a well-order that starts from $(\omega_{\beta})_{\beta\in\alpha}$,
  in the usual order of $\beta\in\alpha$,
  such that
  \begin{equation*}
    \omega_{\beta}(t)
    :=
    \begin{cases}
      t & \text{if $t<w_{\beta}$}\\
      w_{\beta} & \text{otherwise}.
    \end{cases}
  \end{equation*}
  It remains to set $\omega(t):=t$ for all $t\in[0,1]$.
\end{proof}

An analogue of Lemma~\ref{lem:W} (however, with $\supseteq$ in place of $=$)
is contained in Theorem~3.5 of \cite{Hardin/Taylor:2008}.
\end{document}